\newtheorem{theorem}{Theorem}
\newtheorem{proposition}{Proposition}
\newtheorem{lemma}{Lemma}
\theoremstyle{remark}
\newtheorem{remark}{Remark}
\numberwithin{equation}{section}
\newcommand{\E}{\mathbb{E}}
\newcommand{\Prob}{\mathbb{P}}
\newcommand{\pk}{\infty}
\newcommand{\pkl}{0}
\newcommand{\X}{(0,\infty)}
\renewcommand{\nu}{\mathfrak{m}}
\newcommand{\s}{\mathfrak{s}}
\newcommand{\G}{\mathcal{L}}
\begin{document}

\title[Feller diffusion and bursting in gene networks]{The combined effects of Feller diffusion and transcriptional/translational bursting in simple gene networks}

\author{Mateusz Falfus}
\address{Institute of
Mathematics, University of Silesia, Bankowa 14, 40-007 Katowice, Poland} \email{mateusz.falfus@us.edu.pl}

\author{Michael~C. Mackey}

\address{Departments of Physiology, Physics \& Mathematics, McGill University, 3655 Promenade Sir
William Osler, Montreal, QC, Canada, H3G 1Y6}\email{michael.mackey@mcgill.ca}

\author{Marta Tyran-Kami\'nska}
\address{Institute of
Mathematics, Polish Academy of Sciences, Bankowa 14, 40-007 Katowice, Poland} \email{mtyran@us.edu.pl}

\begin{abstract}
We study a stochastic model of biosynthesis of proteins in generic bacterial operons.  The stochasticity arises from two different processes, namely from  `bursting' production of either mRNA and/or protein (in the transcription/translation process) and from standard diffusive fluctuations. The amount of protein follows the Feller diffusion, while the bursting introduces random jumps between trajectories of the diffusion process. The combined effect leads to a process commonly known as a diffusion process with jumps. We study existence of invariant densities and the long time behavior  of distributions of the corresponding Markov process, proving asymptotic stability in the evolution of the density.
\end{abstract}

\keywords{Stochastic modelling, diffusion with jumps, invariant density, stochastic semigroup}
\subjclass[2010]{47D06, 60J25, 60J60, 92C40}

\maketitle

\section{Introduction}\label{s:intro}

The operon concept for the regulation of bacterial genes, which was first put forward in \cite{Jacob1960operon},  has had
an astonishing and revolutionary effect on the development of understanding in molecular biology.  In the operon concept, transcription of DNA produces messenger RNA   (mRNA, denoted $M$).  Then
through the process of translation of mRNA, intermediate protein   ($I$) is produced which is capable of
controlling metabolite ($E$) levels that in turn can feed back and affect either/or transcription and/or translation. A typical example would be in the lactose operon   where the intermediate is $\beta$-galactosidase  and the metabolite is allolactose.   These
metabolites are often referred to as effectors,  and can, in the simplest case, be either stimulatory
(so called inducible)  or inhibitory (or repressible)  to the entire process.

Mathematical treatments of the operon concept appeared rapidly after the idea was embraced by
biologists.  Thus, \cite{goodwin1965} gave the first analysis of operon dynamics
which had been formulated in \cite{Goodwin1963}.  These first attempts were swiftly followed by
Griffith's analysis of a simple repressible operon \cite{Griffith68a} and an inducible operon
\cite{Griffith68b}, and these and other results were nicely summarized in \cite{tyson-othmer-1978} and \cite{selgrade79}.

For a generic operon with a maximal level of
transcription $ b $ (in concentration units), the dynamics are given by
\cite{goodwin1965,Griffith68a,Griffith68b,othmer76,selgrade79}
   \begin{align}
    \dfrac{dM}{dt} &=   b\varphi (E) -\gamma_M M,\label{eq:mrna}\\
    \dfrac{dI}{dt} &= \beta_I    M -\gamma_I I ,\label{eq:intermed}\\
    \dfrac{dE}{dt} &= \beta_E   I -\gamma_E E.\label{eq:effector}
    \end{align}
It is assumed here that the rate $\varphi$ of mRNA production is proportional to the fraction of time the operator region
is active, and that the rates of protein and metabolite production are proportional to the amount of mRNA and
intermediate protein respectively. All three of the components $(M,I,E)$ are subject to degradation,  and the
function $\varphi $ is as determined in \cite{mackeysimple}.

Identifying fast and slow variables can give considerable simplification and insight into the long term
behavior of the system. A fast variable  in a given dynamical system relaxes much more rapidly to an equilibrium
than a slow one \cite{haken83}. Differences in degradation rates in  chemical and biochemical systems lead to
the distinction that the slowest variable  is the one that has the smallest degradation rate.  Typically the degradation rate of mRNA is much greater than the corresponding degradation rates for both the
intermediate protein and the effector $(\gamma_M \gg \gamma_I,\gamma_E)$ so in this case the mRNA dynamics are
fast and we have $\gamma_M M  \simeq  b \varphi (E)$.
If $\gamma_M \gg \gamma_I \gg \gamma_E$ so that the effector is the slowest variable, then the three variable system describing the generic operon
reduces to a one dimensional system
    \begin{equation} \dfrac{dE}{dt} = -\gamma_E E +  \dfrac{b\beta_I \beta_E}{\gamma_I \gamma_M}  \varphi (E)
    \label{eq:1D-E}
    \end{equation}
for the relatively slow effector dynamics. If instead  the effector  qualifies as a fast variable so that $\gamma_M \gg \gamma_E \gg \gamma_I$
    then the intermediate protein is the slowest variable and
    \begin{equation}
    \dfrac{dI }{dt} = -\gamma_I  I + \dfrac{b\beta_I}{\gamma_M}  \varphi \left(\frac{ \beta_E I}{\gamma_E}\right).
    \label{eq:1D-I}
    \end{equation}
Defining dimensionless variables, both equations \eqref{eq:1D-I} and  \eqref{eq:1D-E}  are seen to be of the form
    \begin{equation}
   \dfrac{dx}{dt} = - \gamma x + \varphi(x)
    \label{eq:1D-general}
    \end{equation}
and this will be our starting point in the examination of the effects of noise on the dynamics.

In cellular and molecular biology, as experimental techniques have allowed investigators to probe temporal behavior at ever finer levels, even to the level of individual molecules, the question has arisen about whether the fluctuations observed in data are measurement noise or are  playing a role in the operation of the molecular regulatory process.  Experimentalists and theoreticians alike who are interested in the regulation of gene networks are increasingly focussed on trying to access the role of various types of fluctuations on the operation and fidelity of both simple and complex gene regulatory systems.  Recent reviews \cite{kaern05,raj08,swain08c} give an interesting perspective on some of the issues confronting both experimentalists and modelers.

Typically, the discussion seems to focus on whether fluctuations  can be considered as extrinsic to the system
under consideration \cite{swain08a,ochab08,ochab10}, or whether they are an intrinsic part of the fundamental
processes they are affecting (e.g. bursting, see below). The dichotomy is rarely so sharp however, but
\cite{elowitz02} has proposed an elegant experimental technique to distinguish between the two, see also
\cite{raser04}, while \cite{scott06} and \cite{swain02a}   have laid the groundwork for a theoretical
consideration of this question.  One issue that is raised persistently in considerations of the role of
fluctuations or noise in the operation of gene regulatory networks is whether or not they are ``beneficial"
\cite{blake06} or ``detrimental" \cite{fraser04} to the operation of the system under consideration.  This is,
of course, a question of definition and not one that we will be further concerned with here.

 From a modeling perspective there have been a number of studies attempting to understand the effects of noise on gene regulatory dynamics. The now classical \cite{kepler01} laid much of the ground work for subsequent studies by its treatment of a variety of noise sources and their effect on dynamics. In \cite{mcmmtkry11} the effects of either bursting or Gaussian noise on both inducible and repressible operon models were examined.  A recent monograph \cite{mackeysimple} gives extensive background information for the history of modeling of the effects of noise in gene regulation.

Here, we consider the density of the molecular distributions  in generic bacterial operons in the
presence of `bursting' (commonly known as intrinsic noise in the biological literature) as well as inherent
(extrinsic) noise using an analytical approach. Our work is motivated by the well documented production of mRNA
and/or protein in stochastic bursts in both prokaryotes and eukaryotes
\cite{blake03,cai,chubb,golding,raj,sigal,yu}, and follows other contributions by, for example,
\cite{kepler01}, \cite{friedman06}, \cite{rudnicki07} and \cite{swain08b}.

Jump Markov processes are often used in modelling stochastic gene expressions with explicit bursting in either mRNA or proteins \cite{friedman06,golding}, and have been employed as models for genetic networks \cite{Zeisler:2008}.
Biologically, the `bursting' of mRNA or protein is simply a process in which there is a production of several molecules within a very short time. In the biological context of modelling stochastic gene expression, explicit models of bursting mRNA and/or protein production have been analyzed recently, either using a discrete  \cite{swain08b} or a continuous formalism \cite{friedman06,Lei2009,mcmmtkry11} as even more experimental evidence from single-molecule visualization techniques has revealed the ubiquitous nature of this phenomenon  \cite{Elf2007,golding,Ozbudak2002,Raj2009,raj,Suter2011,Xie2008}.

We consider the situation in which there is both bursting production of molecules
 and fluctuations in the degradation rate of  molecules, which we left unsolved in \cite{mcmmtkry11}.
The amount of protein follows the Feller diffusion on  $(0,\infty)$ defined as the solution of the one dimensional
Ito stochastic differential equation
\begin{equation}
    dX_t = -\gamma X_t dt + \sigma\sqrt{X_t}dW_t,\quad X_0=x\in (0,\infty),
    \label{e:feller-sde}
\end{equation}
where $\gamma,\sigma>0$ and  $\{W_t:t\ge 0\}$ is a one-dimensional standard Wiener process (Brownian motion).
There is existence and pathwise uniqueness of the solution of \eqref{e:feller-sde} until
the exit time $\zeta=\inf\{t>0: X_t\not\in (0,\infty)\}$. Moreover, the process is absorbed at $0$ a.s. in the
sense that if $\tau_0=\inf\{t>0: X_t=0\}$ is the first hitting time of $0$ for the process $\{X_t:0\le
t<\zeta\}$ then
\[
\Pr(\tau_0=\zeta<\infty|X_0=x)=\Pr(\lim_{t\uparrow \zeta}X_t=0|X_0=x)=1 \quad \text{for all}\quad x>0.
\]
Consequently,  for $x>0$ and $t\ge \tau_0$ we can define $X_t=0$, so that for every $x>0$ the process starting at
$x$ is defined for all times and has values in $[0,\infty)$. Since the unique solution $X$ of
\eqref{e:feller-sde} starting at $X_0=0$ is $X_t=0$, $t\ge 0$, we can extend the state space to $[0,\infty)$.

The random degradation of molecules described by the Feller diffusion is interrupted at random times
\[
0<t_1<t_2<\ldots
\]
which occur at rate $\varphi(x)$ dependent on the current amount of molecules. At each $t_k$ a  random amount $e_k$  of protein molecules is produced according to a distribution with density $h$, independently of everything else.
Consequently, the model follows Feller diffusion with additive jumps, which can be defined
as a Markov process  $Z=\{Z_t\}_{t\ge 0}$  solving the following stochastic differential
equation
\begin{equation}\label{eq_jumpdiff}
Z_t=Z_0-\gamma \int_0^t Z_{s-}ds+\sigma\int_{0}^t \sqrt{Z_s} dW_s+\int_0^t\int_{0}^{\infty}\int_{0}^{\infty} z 1_{\{r\le
\varphi(Z_{s-})\}}N(ds,dz,dr)
\end{equation}
where $N(ds,dz,dr)$ is a
Poisson random measure on $(0,\infty)\times [0,\infty)^2$ with intensity $dsh(z)dzdr$, $h$ is a probability density function on $(0,\infty)$, and $\varphi$ is a Borel measurable function locally bounded on $[0,\infty)$. We  study the process
$Z$ as in~~\eqref{eq_jumpdiff} on the state space $E=[0,\infty)$.

The process $Z$ is an example of a jump-diffusion process with jumps from the boundary and we study asymptotic properties of such process with the help of stochastic semigroups. Diffusion processes on bounded domains with random jumps from the boundary have fine ergodic and spectral properties \cite{pinsky09,pinsky09sp}. Yet, a different approach is given in \cite{bansaye11}, where the authors study extinction.

The outline of this paper is as follows.  Section \ref{sec:prelim} collects together some basic material including definitions and necessary concepts.  Section \ref{sec:feller} treats some elementary properties of the Feller diffusion, while Section \ref{sec:semigroup-feller} develops the semigroup for the Feller diffusion and properties of the semi-group that we will later need.  Section \ref{sec:diffusion-jumps} gets to the heart of the matter by treating the combined diffusion and jump processes for the specific case of transcriptional and/or translational bursting.  We prove in Section \ref{ssec:long} that there is a unique invariant density of the molecular distribution and that it is asymptotically stable.  In Section \ref{ssec:limit-expon} we explicitly assume that the distribution of transcriptional/translational bursts is exponentially distributed (as often found experimentally) and derive the differential equation that the stationary density satisfies.  Section \ref{constant} gives an explicit solution for the invariant density in the special case that the bursting rate $\varphi$ is constant.
We conclude in Section \ref{sec:disc-conclusions} with a brief summary of extant problems and directions for future research.

\section{Preliminaries}\label{sec:prelim}
In this section we collect some preliminary material.
Let $(\mathcal{X},\|\cdot \|)$ be a Banach lattice, it will be either an $L^1$ space of integrable functions or a subspace of the space of bounded measurable functions with the supremum norm.
The domain of a linear operator $A$  will be denoted by $\mathcal{D}(A)$. A linear operator $A$ is said to be \emph{positive} if
$A f\ge 0$ for all $f\in \mathcal{D}(A)$ such that $f\ge 0$.
A bounded linear operator $A$ is called a  contraction if $\|A\|\le 1$.
If for some real $\lambda$ the operator
$\lambda-A:=\lambda I-A$ is one-to-one, onto, and $(\lambda -A)^{-1}$ is a bounded linear operator, then $\lambda$
is said to belong to the resolvent set $\rho(A)$ and $R(\lambda,A):=(\lambda  -A)^{-1}$ is called the resolvent
at $\lambda$ of $A$.

A family of positive (contraction) operators $\{S(t)\}_{t\ge 0}$ on $\mathcal{X}$  is called a \emph{positive (contraction) semigroup}, if it is a
$C_0$-\emph{semigroup}, i.e.,
\begin{enumerate}[\upshape (1)]
\item $S(0)=I$ (the identity operator);
\item $S(t+s)=S(t)S(s)$ for every $s,t\ge 0$;
\item for each $f\in L^1$ the mapping $t\mapsto S(t)f$ is continuous: for each $s\ge 0$
\[
\lim_{t\to s^{+}}\|S(t)f-S(s)f\|=0.
\]
\end{enumerate}
The infinitesimal \emph{generator} of
$\{S(t)\}_{t\ge0}$ is by definition the operator $A$ with domain $\mathcal{D}(A)\subset \mathcal{X}$ defined as
\[
\begin{split}
\mathcal{D}(A)&=\{f\in \mathcal{X}: \lim_{t\downarrow 0}\frac{1}{t}(S(t)f-f) \text{ exists} \},\\
Af&=\lim_{t\downarrow 0}\frac{1}{t}(S(t)f-f),\quad f\in \mathcal{D}(A).
\end{split}
\]
The generator $A$ of a $C_0$-semigroup is closed with $\mathcal{D}(A)$ dense in $\mathcal{X}$, see e.g. \cite[Theorem II.1.4]{engelnagel00b}. If $A$ is the generator of the positive contraction semigroup $\{S(t)\}_{t\ge0}$ then
$(0,\infty)\subset \rho(A)$ and we have the integral representation
\[
R(\lambda,A)f=\int_{0}^{\infty}e^{-\lambda s}S(s)f\,ds \quad \text{for}\quad f\in \mathcal{X}.
\]
The operator $\lambda R(\lambda,A)$ is a positive contraction and $R(\mu,A)f\le R(\lambda,A)f$ for $\mu>\lambda>0$, $f\in
\mathcal{X}$, $f\ge 0$.
For the semigroup theory we refer to \cite{engelnagel00b}.

Let the triple $(E,\mathcal{E},m)$ be a $\sigma$-finite measure space and let $L^1=L^1(E,\mathcal{E},m)$ with
norm denoted by $\|\cdot\|_1$. A linear operator $P$ on $L^1$ is called \emph{substochastic} (\emph{stochastic})
if $Pf\ge 0$ and $\|Pf\|_1\le \|f\|_1$ ($\|Pf\|_1= \|f\|_1$) for all $f\ge 0$, $f \in L^1$. We denote by $D$ the
set of all densities on $E$, i.e.
$$
D=\{f\in L^1: \,\, f\ge 0,\,\, \|f\|_1=1\},
$$
so that a stochastic operator transforms a density into a density.
A  semigroup $\{P(t)\}_{t\ge0}$ of linear operators on $L^1$ is called \emph{substochastic} (\emph{stochastic})
if it is a $C_0$-semigroup and for each $t>0$ the operator $P(t)$ is substochastic (stochastic). A density
$f^*$ is called \emph{invariant} or \emph{stationary} for $\{P(t)\}_{t\ge 0}$ if $f^*$ is a fixed point of each
operator $P(t)$, $P(t) f^*=f^*$ for every $t\ge 0$.

\section{Feller diffusion}\label{sec:feller}

In this section we study  the semigroups on the spaces of continuous functions related to the Feller diffusion introduced in Section~\ref{s:intro}.
Let $X=\{X_t\}_{t\ge 0}$ be the Feller diffusion as in \eqref{e:feller-sde} defined on
$[0,\infty)$ and let
 $\Prob_x$ be the law of the process $X$
starting at $X_0=x\ge 0$. We denote by $\E_x$ the expectation with respect to $\Prob_x$.

\begin{lemma}\label{l:fdc} Let  $\tau_0=\inf\{t>0: X_t=0\}$.
For all $x>0, t>0$ and all bounded Borel measurable functions $f$ on $[0,\infty)$  we have
\[
\E_x(f(X_t)1_{\{t<\tau_0\}}) =\int_{0}^\infty p_t(x,y)f(y)dy,
\]
where
\begin{equation}
p_t(x,y)=c_t\sqrt{\frac{e^{\gamma t}x}{y}}e^{-c_t(x+e^{\gamma t}y)}I_1(2c_t\sqrt{e^{\gamma t}xy}),\quad t,x,y>0,
\label{e:ptxy}
\end{equation}
$I_1$ is the modified Bessel function of the first kind
\[
I_1(x)=\sum_{k=0}^\infty\frac{1}{k!(k+1)!}\left(\frac{x}{2}\right)^{2k+1} \quad \text{and}\quad
c_t=\frac{2\gamma}{\sigma^2(e^{\gamma t}-1)}.
\]
Moreover,
\begin{equation}
\Prob_x(t<\tau_0)=\int_{0}^\infty p_t(x,y)dy=1-e^{-c_tx},\quad x> 0, t>0,
\label{e:T0}
\end{equation}
and
\begin{equation}\label{e:ext0}
\E_x(e^{-\lambda \tau_0})=\lambda\int_{0}^{\infty}e^{-\lambda t}\Prob_x(\tau_0\le
t)dt=\frac{\lambda}{\gamma}\int_{0}^{\infty}e^{-\frac{2\gamma}{\sigma^2}xs}
\frac{s^{\frac{\lambda}{\gamma}-1}}{(s+1)^{\frac{\lambda}{\gamma}+1}}ds,\quad x>0,\lambda>0.
\end{equation}
\end{lemma}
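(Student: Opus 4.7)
The plan is to reduce all three assertions to a single computation of the spatial Laplace transform of $X_t$. Fix $\mu>0$ and set $u(t,x):=\E_x(e^{-\mu X_t})$; since $X$ is absorbed at $0$, this function is well-defined on $[0,\infty)$ with $u(t,0)=1$, and by It\^o's formula it solves the backward Kolmogorov equation $u_t=-\gamma x u_x+\tfrac12\sigma^2 x u_{xx}$ with $u(0,\cdot)=e^{-\mu\,\cdot}$. The ansatz $u(t,x)=e^{-a(t)x}$ collapses the PDE to the Bernoulli ODE $a'=-\gamma a-\tfrac12\sigma^2 a^2$ with $a(0)=\mu$, which linearises under $v=1/a$ to $v'-\gamma v=\sigma^2/2$ and integrates to
\[
a(t)=\frac{\mu c_t}{\mu+c_t e^{\gamma t}},\qquad\text{so}\qquad \E_x(e^{-\mu X_t})=\exp\!\left(-\frac{\mu c_t x}{\mu+c_t e^{\gamma t}}\right).
\]

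Sending $\mu\to\infty$ (dominated convergence; $X_t=0$ on $\{\tau_0\le t\}$) extracts the absorbing mass $\Prob_x(\tau_0\le t)=e^{-c_t x}$, which proves \eqref{e:T0}. For \eqref{e:ptxy}, I would subtract this atom and rewrite
\[
\E_x(e^{-\mu X_t}\,1_{\{t<\tau_0\}})=e^{-c_t x}\Bigl[\exp\!\Bigl(\tfrac{c_t^2 e^{\gamma t}x}{\mu+c_t e^{\gamma t}}\Bigr)-1\Bigr],
\]
expand the inner exponential in powers of $(\mu+c_t e^{\gamma t})^{-1}$, and invert term by term using $\int_0^\infty e^{-\mu y}\tfrac{y^{k-1}}{(k-1)!}e^{-c_t e^{\gamma t}y}\,dy=(\mu+c_t e^{\gamma t})^{-k}$. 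Regrouping the resulting series through $\sum_{j\ge 0}z^j/(j!(j+1)!)=I_1(2\sqrt{z})/\sqrt{z}$ identifies the inverse Laplace transform exactly with the $p_t(x,y)$ of \eqref{e:ptxy}; uniqueness of Laplace transforms and positivity of the series coefficients then certify $p_t$ as the subprobability density of $X_t$ on $\{t<\tau_0\}$, while the normalization $\int_0^\infty p_t(x,y)\,dy=1-e^{-c_t x}$ falls out as the $\mu\downarrow 0$ case of the same identity.

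I expect the Bessel repackaging to be the main obstacle: the factorials $k!(k-1)!$ coming out of the Laplace inversion must be matched by an index shift to the coefficients $k!(k+1)!$ of $I_1$, and one must verify that the Bessel function arising is $I_1$ rather than a neighbour in the family. Finally, \eqref{e:ext0} follows from \eqref{e:T0} by a short calculation: integration by parts gives $\E_x(e^{-\lambda\tau_0})=\lambda\int_0^\infty e^{-\lambda t-c_t x}\,dt$, and the substitution $s=1/(e^{\gamma t}-1)$, under which $e^{\gamma t}=(s+1)/s$, $c_t x=2\gamma xs/\sigma^2$ and $dt=-ds/(\gamma s(s+1))$, transforms this into the right-hand side of \eqref{e:ext0}.
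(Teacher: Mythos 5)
Your proposal is correct, but it takes a genuinely different route from the paper. The paper does not compute anything from scratch: it invokes the representation $X_t=e^{-\gamma t}Y_{r(t)}$ with $r(t)=1/(2c_t)$, where $Y$ is the squared Bessel process of dimension $0$, imports the known transition function of $Y$ (atom at $0$ plus density $q_t^0$) from Revuz--Yor, and reads off $p_t(x,y)=e^{\gamma t}q^0_{r(t)}(x,ye^{\gamma t})$ and the absorbed mass $e^{-c_tx}$ by a change of variables; only \eqref{e:ext0} is then obtained by the substitution $e^{\gamma t}=1+1/s$, exactly as in your last step. You instead derive the spatial Laplace transform $\E_x(e^{-\mu X_t})=\exp\bigl(-\mu c_tx/(\mu+c_te^{\gamma t})\bigr)$ from the affine ansatz and the Riccati/Bernoulli ODE, extract the atom by letting $\mu\to\infty$, and invert the remaining transform term by term to reconstruct the Bessel series; I checked the index shift and the reassembly into $I_1$, and they do produce \eqref{e:ptxy} exactly. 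What the paper's route buys is brevity and the avoidance of any inversion argument; what yours buys is self-containedness (no appeal to the Bessel-process literature) and, as a by-product, the explicit Laplace transform that the paper only states later as a remark. Two small points you should tighten: (i) rather than asserting that $u(t,x)=\E_x(e^{-\mu X_t})$ solves the backward equation and that the exponential ansatz captures it, run the argument in the verification direction --- define $a$ by the ODE and apply It\^o's formula to $s\mapsto e^{-a(t-s)X_s}$, which is a bounded local martingale, hence a true martingale, giving the identity without any regularity or uniqueness discussion; (ii) uniqueness of Laplace transforms identifies the law of $X_t$ restricted to $\{t<\tau_0\}$ as the measure with density $p_t(x,\cdot)$, and passing from that to the stated identity for all bounded Borel $f$ deserves the one-line monotone class remark. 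Neither is a gap in substance.
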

\begin{proof}
The process $X$ can be represented as (see e.g. \cite[Section XI.1]{revuzyor99})
\[
X_t=e^{-\gamma t}Y_{r(t)},\quad \text{where }
r(t)=\frac{1}{2c_t}
\]
and $Y=\{Y_t\}_{t\ge 0}$ is the unique strong solution of the equation
\[
dY_t=2\sqrt{Y_t}dW_t, \quad Y_0=x\ge 0,
\]
called the square of the Bessel process of dimension $0$.  The transition probabilities for $Y$ are given by (\cite[Corollary XI.1.4]{revuzyor99})
\[
\Pr(Y_t\in B|Y_0=x)=\exp\left\{-\frac{x}{2t}\right\}\delta_0(B)+\int_{B}q_t^0(x,y)dy,\quad x>0,
\]
for every Borel measurable set $B\subset[0,\infty)$, where $\delta_0$ is the Dirac measure at $0$ and
\[
q^0_t(x,y)=\frac{1}{2t}\sqrt{\frac{x}{y}}\exp\left\{-\frac{x+y}{2t}\right\} I_1\left(\frac{\sqrt{xy}}{t}\right),
\quad x, y>0.
\]
Thus the process $X$ satisfies
\[
\Prob_x(X_t\in B)=
\Pr(Y_{r(t)}\in e^{\gamma t}B|Y_0=x)
\]
for all Borel measurable sets $B\subset [0,\infty)$. Hence, we conclude that $p_t(x,y)=e^{\gamma
t}q^0_{r(t)}(x,ye^{\gamma t})$ for $x,y>0$. Since $\Prob_x(t<\tau_0)=\E_x(1_{\{t<\tau_0\}}) $, we also obtain
\eqref{e:T0}. The last formula follows by making the substitution $e^{\gamma t}=1+1/s$ under the
integral.
\end{proof}

\begin{remark}
Since
\[
I_1(x)\sim \sqrt{x/2},\quad \text{as } x\to 0, \quad \text{and}\quad I_1(x)\sim e^x/\sqrt{2\pi x}, \quad
\text{as } x\to \infty,
\]
we obtain
\[
p_t(0,y):=\lim_{x\to 0}p_t(x,y)=0\quad \text{and} \quad p_t(\infty,y):=\lim_{x\to \infty}p_t(x,y)=0\quad
\text{for all } t,y>0.\] Observe also that
\begin{equation}
\int_{0}^\infty p_t(x,y)e^{-r  y}dy=\exp\left \{-\frac{r  x e^{-\gamma t}c_t}{c_t+r }\right \}, \quad
x,t,r >0,
\label{eq:fdexp}
\end{equation}
which is a direct consequence of the following formula (see \cite[Section IX.1, p. 441]{revuzyor99})
\[
\int_{0}^\infty q_t^0(x,y)e^{-r  y}dy=\exp \left \{-\frac{r  x }{1+2r  t} \right \}.
\]
\end{remark}

We identify $C[0,\infty]$ with the space of bounded continuous functions on $(0,\infty)$ for which $f(0):=\lim_{x\to
0^+}f(x)$ and $f(\infty):=\lim_{x\to \infty}f(x)$ exist and are finite. Let $C_0(0,\infty]$ denote the subspace of $C[0,\infty]$ consisting of
functions vanishing at $0$. We consider both spaces with the supremum norm.

\begin{lemma}\label{l:T}
For $f\in C[0,\infty]$ and $t> 0$ define
\begin{equation}\label{d:semT}
T(t)f(x)=\int_{0}^\infty p_t(x,y)f(y)dy,\quad x\ge 0,
\end{equation}
where $p_t$ is as in \eqref{e:ptxy}.
Then $T(t)f\in C_0(0,\infty]$ for $f\in C[0,\infty]$ and $\{T(t)\}_{t\ge 0}$ is a positive contraction semigroup on $C_0(0,\infty]$.
\end{lemma}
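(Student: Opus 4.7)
The plan is to verify in turn the four defining properties of a $C_0$-semigroup of positive contractions: (i) $T(t)$ maps $C[0,\infty]$ into $C_0(0,\infty]$, (ii) each $T(t)$ is positive and contractive, (iii) the semigroup identity $T(t+s) = T(t)T(s)$, and (iv) strong continuity at $t = 0^+$, which I expect to be the main obstacle.

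For (i), I would exploit the probabilistic representation $T(t)f(x) = \E_x(f(X_t)\,1_{\{t<\tau_0\}})$ from Lemma~\ref{l:fdc}. Continuity of $x\mapsto T(t)f(x)$ on $(0,\infty)$ follows from joint continuity of $p_t(x,y)$ on $(0,\infty)^2$ combined with dominated convergence, using the uniform bound $\int_0^\infty p_t(x,y)\,dy\le 1$. The Remark yields $T(t)f(0)=0$ via $p_t(0,\cdot)=0$. For the limit at $x=\infty$, I would apply Chebyshev's inequality---noting $\E_x X_t = xe^{-\gamma t}$ while $\operatorname{Var}_x X_t$ grows only linearly in $x$---to conclude that $X_t\to\infty$ in $\Prob_x$-probability as $x\to\infty$; bounded convergence then gives $T(t)f(x)\to f(\infty)$, using also $\Prob_x(t<\tau_0) = 1-e^{-c_t x}\to 1$.

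Steps (ii) and (iii) are routine: positivity is immediate from $p_t\ge 0$ and contractivity from \eqref{e:T0}. The semigroup identity is the Chapman--Kolmogorov relation for the sub-probability kernel $p_t$, which I would deduce from the strong Markov property of the absorbed diffusion by conditioning on $\mathcal{F}_s$,
\[
T(t+s)f(x) = \E_x\bigl(1_{\{s<\tau_0\}}\,\E_{X_s}(f(X_t)\,1_{\{t<\tau_0\}})\bigr) = T(s)(T(t)f)(x).
\]

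The main technical step is (iv). I would first establish pointwise convergence $T(t)f(x)\to f(x)$ for $f\in C_0(0,\infty]$ and each $x\in[0,\infty]$: on $(0,\infty)$ it follows from path continuity of $X$, the fact that $\tau_0>0$ $\Prob_x$-a.s., and bounded convergence; at $x=0$ it is trivial, and at $x=\infty$ it follows from the argument of (i). To upgrade to uniform convergence, I would test against the family $f_r(y) = 1-e^{-ry}$, $r>0$: combining $T(t)1(x) = 1-e^{-c_t x}$ with the Laplace-transform identity \eqref{eq:fdexp} applied to $T(t)(e^{-r\cdot})$ yields a closed-form expression of the type $T(t)f_r(x) = 1 - \exp(-\rho_t(r)\,x)$ with $\rho_t(r)\to r$ as $t\to 0^+$ (since $c_t\to\infty$), and then the elementary bound $|e^{-\alpha}-e^{-\beta}|\le e^{-\min(\alpha,\beta)}|\alpha-\beta|$ together with $\sup_{u\ge 0} u\,e^{-\eta u} < \infty$ for each $\eta>0$ gives $\|T(t)f_r - f_r\|_\infty\to 0$ uniformly in $x\in[0,\infty]$. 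Under the change of variable $u=e^{-y}$, $C_0(0,\infty]$ identifies with $\{g\in C[0,1]: g(1)=0\}$ and the linear span of $\{f_r: r\in\mathbb{N}\}$ with the polynomials in $u$ vanishing at $u=1$, which are dense by Stone--Weierstrass; contractivity established in (ii) then extends the uniform convergence to all of $C_0(0,\infty]$. Strong continuity at arbitrary $s\ge 0$ follows from $\|T(t+s)f - T(s)f\|_\infty \le \|T(t)f - f\|_\infty$.
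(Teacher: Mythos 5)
Your proposal is correct and follows the same skeleton as the paper's proof: both reduce strong continuity to the test functions $f_r(y)=1-e^{-ry}$ via the Laplace-transform identity \eqref{eq:fdexp} and then use density of their span in $C_0(0,\infty]$. The differences are in how much is done by hand. The paper simply asserts that each $T(t)$ is a positive contraction on $C_0(0,\infty]$ (leaving the mapping property and the semigroup identity unverified), proves only \emph{pointwise} convergence $T(t)f_r(x)\to f_r(x)$, and then invokes the standard Feller-semigroup result \cite[Proposition III.2.4]{revuzyor99} to upgrade pointwise convergence on a dense set to strong continuity. You instead verify the Chapman--Kolmogorov identity from the Markov property, check the limits of $T(t)f$ at $0$ and at $\infty$ explicitly (the Chebyshev argument at $x\to\infty$ is a detail the paper omits), and prove \emph{uniform} convergence on the test functions directly from the closed form $T(t)f_r(x)=1-e^{-\rho_t(r)x}$, extending by density and contractivity; this makes the argument self-contained at the cost of length. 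One caution on your key formula: if you take \eqref{eq:fdexp} at face value you would get $T(t)f_r(x)=1-e^{-c_tx}-e^{-\rho_t(r)x}$, and the extra term $e^{-c_tx}$ does \emph{not} tend to $0$ uniformly in $x\ge 0$. Your cleaner expression is nevertheless the correct one, because on $\{t\ge\tau_0\}$ one has $f_r(X_t)=f_r(0)=0$, so $T(t)f_r(x)=1-\E_x(e^{-rX_t})$ and the identity \eqref{eq:fdexp} as printed is really the Laplace transform of the full law of $X_t$ including the atom at $0$ (as one sees by letting $r\to 0$ or $x\to 0$); when writing this up you should make that cancellation explicit, since your uniform estimate depends on it.
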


\begin{proof}
Each operator $T(t)$ is a positive contraction on $C_0(0,\infty]$.
It follows from \eqref{eq:fdexp} that \begin{equation}\label{e:F2}
\lim_{t\to 0}T(t)f(x)=f(x),\quad x>0,
\end{equation}
for each function $f\in C_0(0,\infty]$ of the form $f(x)=1-e^{-rx}$, where $r>0$. Since every function from $C_0(0,\infty]$ can be approximated by a linear combinations of such exponentials, we infer that \eqref{e:F2}  holds for all $f\in C_0(0,\infty]$. This  and standard arguments (see e.g. \cite[Proposition III.2.4]{revuzyor99}) show that $\{T(t)\}_{t\ge 0}$ is a positive contraction semigroup on the Banach space  $C_0(0,\infty]$.
\end{proof}

The generator of the semigroup $\{T(t)\}_{t\ge 0}$ as in \eqref{d:semT} is  the second order differential operator
\cite{feller52,mandl68,itomckean}
\begin{equation}
\G f(x)=\frac{1}{2}\sigma^2 x f''(x)-\gamma xf'(x),\quad x>0,
\label{e:sodef}
\end{equation}
defined on
\[
\mathcal{D}(\G)=\{f\in C_0(0,\infty]\cap C^2(0,\infty): \G f\in C_0(0,\infty]\}.
\]
The point $0$ is an exit boundary point for the diffusion $X$ and the point $\infty$ is a natural boundary point. This can be verified by invoking the Feller classification of boundary points \cite{feller52}. To this end we need to recall the concept of the scale function $\s $ and the speed measure $\nu $.
They are absolutely continuous with respect to the Lebesgue measure.
If the generator is of the form
\[
\G f(x)=\beta(x)f''(x)+\alpha(x)f'(x),\quad x>0,
\]
then $\s$ and $\nu$ are defined through their derivatives  given by
\[
\s'(x)=\exp\left( -\int_1^x\frac{\alpha(z)}{\beta(z)}dz\right)\quad\text{and}\quad \nu'(x)=\frac{1}{\beta(x)\s'(x)}.
\]
Thus we have
\[
\s'(x)=e^{\frac{2\gamma}{\sigma^2}(x-1)}\quad\text{and}\quad \nu'(x)=\frac{2}{\sigma^2 x}e^{-\frac{2\gamma}{\sigma^2}(x-1)}.
\]
The point $0$ is an exit boundary if and only if
$\s'\nu \in L^1(0,1)$ and $\s \nu'\not\in L^1(0,1)$, while $\infty$ is a natural boundary point if and only if $\s'\nu \not\in L^1(1,\infty)$ and $\s \nu'\not\in L^1(1,\infty)$. Observe that we can rewrite \eqref{e:ptxy} as
\[
p_t(x,y)=p(t,x,y)\nu'(y),\quad x,y>0,
\]
where
\[
p(t,x,y)=\frac{1}{2}\sigma^2 c_t e^{-\frac{2\gamma}{\sigma^2}} \sqrt{e^{\gamma t}xy}e^{-c_t(x+y)}I_1(2c_t\sqrt{e^{\gamma t}xy}),\quad t,x,y>0.
\]

If we fix $\lambda>0$, then we may introduce the Green's function $G_\lambda$ as the Laplace transform of $p_t$
\[
G_\lambda(x,y)=\int_0^{\infty} e^{-\lambda t} p_t(x,y)dt,\quad x\ge 0,y>0.
\]
Define the operator $U_\lambda\colon  C[\pkl,\pk]\to  C[\pkl,\pk]$ by
\begin{equation}
U_\lambda f(x)=\int_{\pkl}^{\pk} G_\lambda(x,y)f(y)dy.
\label{d:Ulam}
\end{equation}
The operator $\lambda U_\lambda$ is a positive contraction on $C[\pkl,\pk]$ and for every $f\in C[\pkl,\pk]$ the function
$\psi=U_\lambda f$ is a particular solution of the equation
\begin{equation}
\lambda\psi -\G \psi=f.
\label{e:rez}
\end{equation}
We recall from \cite{feller52}  that for each $\lambda>0$ the equation
\begin{equation}
\frac12\sigma^2 x\psi''(x)-\gamma x\psi'(x)=\lambda \psi(x),\quad x\in (0,\infty),
\label{e:inv}
\end{equation}
has two strictly positive continuous solutions $\psi_+(x)$ and $\psi_{-}(x)$ for $x>0$ such that $\psi_+$ is increasing, $\psi_{-}$ is decreasing, $\psi_{+}(0)=0$, $\psi_{+}(\infty)=\infty$, $\psi_{-}(0)=1/\lambda$, $\psi_{-}(\infty)=0$. Any other continuous solution of \eqref{e:inv} is a linear combination of $\psi_{+}$ and $\psi_{-}$.
The Wronskian
\[
w_\lambda(x):=\psi_{+}'(x)\psi_{-}(x)-\psi_{+}(x)\psi_{-}'(x)
\]
satisfies $w_\lambda(x)=w_\lambda(1) \s '(x)$ with $w_\lambda(1)>0$ (note that we can set $\s '(1)=1$).
We have
\begin{equation}
G_\lambda(x,y)=\frac{1}{w_\lambda(1)}\nu'(y)\left\{
                 \begin{array}{ll}
                   \psi_{+}(x)\psi_{-}(y), & \hbox{if } x\le y,\\
                   \psi_{+}(y)\psi_{-}(x), & \hbox{if } y\le x.
                 \end{array}
               \right.
               \label{e:Green}
\end{equation}

\begin{remark}\label{r:kummer} If we divide equation \eqref{e:inv} by $\sigma^2/2$, we get
\begin{equation}\label{e:inv1}
x\psi''(x)-\theta x\psi'(x)=\frac{2\lambda}{\sigma^2} \psi(x),
\end{equation}
where $\theta=\frac{2\gamma}{\sigma^2}$.
If we make the change of variables $z=\theta x=2\gamma x/\sigma^2$ in \eqref{e:inv1}, we find that $\psi(x)=w(\theta x)$, where $w$ is the solution of the Kummer differential equation
\begin{equation}\label{e:chde0}
z w''(z)-zw'(z)-\mu w(z)=0, \quad \mu=\frac{2\lambda}{\sigma^2\theta}.
\end{equation}

The confluent hypergeometric function of the second kind
\begin{equation}\label{e:chfU}
U(\mu,0,z)=\frac{1}{\Gamma(\mu)}\int_{0}^\infty e^{-zs}s^{\mu-1}(1+s)^{-\mu-1}ds, \quad \mu,z>0,
\end{equation}
is a positive decreasing solution of \eqref{e:chde0}.
We have $U(\mu,0,0)=1/\Gamma(\mu+1)$. 
The positive increasing solution of \eqref{e:chde0},
independent of $U(\mu,0,z)$ is given by the regularized  confluent hypergeometric function
\[
{}_1\tilde{F}_1(\mu,0,z)=\mu z{}_1F_1(\mu+1,2,z),
\]
where  ${}_1F_1(a,b,z)$ is the confluent hypergeometric function of the first type, noted also $M(a,b,z)$ or $\Phi(a,b,z)$, defined by
\begin{equation}
\label{M} {}_1F_1(a,b,z)=\sum_{k=0}^{\infty}\frac{(a)_k\,z^k}{(b)_k\,k!}
\end{equation}
 and $(c)_k$ is the Pochhammer symbol defined by
$$(c)_k=c(c+1)(c+2)\ldots(c+k-1), \quad(c)_0=1. $$
Consequently, the two strictly monotonic positive solutions of \eqref{e:inv} are
\[
\psi_{+}(x)=\frac{2\lambda}{\sigma^2}x {}_1F_1\left(\frac{\lambda}{\gamma}+1, 2,\frac{2\gamma}{\sigma^2}x\right),\quad \psi_{-}(x)=\frac{\Gamma(\mu)}{\gamma}U\left(\frac{\lambda}{\gamma},0,\frac{2\gamma}{\sigma^2}x\right).
\]
\end{remark}

For a bounded Borel measurable function $f$ on $[0,\infty)$ and $t> 0$, define
\[
T_0(t)f(x)=\E_x(f(X_t)),\quad x\ge 0.
\]
Since $\Prob_0(\tau_0=0)=1$, $\Prob_x(\tau_0<\infty)=1$, and $X_t=0$ for $t\ge \tau_0$,  we obtain that
$\E_x(f(X_t)1_{\{t\ge \tau_0\}})=f(0)\Prob_x(t\ge \tau_0)$.
Thus, by Lemma~\ref{l:fdc},
\begin{equation}
T_0(t)f(x)=\int_{0}^\infty p_t(x,y)f(y)dy+f(0)e^{-c_tx},\quad x\ge 0,t>0.
\label{d:To}
\end{equation}
Note that each operator $T_0(t)$ is a positive contraction on $C[0,\infty]$. Now let $f\in C[0,\infty]$. We can write $f=f_0+f(0)$, where $f_0\in C_0(0,\infty]$. Then
\[
T_0(t)f(x)=T(t)f_0(x)+ f(0),\quad x, t>0 ,
\]
and $T_0(t)f(0)=f(0)$. Thus, $T_0(t)f(x)-f(x)=T(t)f_0(x)-f_0(x)$.
Hence, $\{T_0(t)\}_{t\ge 0}$ is a positive contraction semigroup on $C[0,\infty]$ and its generator $(\G_0,\mathcal{D}(\G_0))$ is given by $\G_0f(0)=0$ and
\begin{equation}
\G_0f(x)=\G (f-f(0))(x),\quad x>0,  f\in \mathcal{D}(\G_0)=\{f\in C[0,\infty]: f-f(0)\in \mathcal{D}(\G)\},
\label{d:L0}
\end{equation}
where $(\G,\mathcal{D}(\G))$ is the generator of $\{T(t)\}_{t\ge 0}$  as in~\eqref{e:sodef}.

\begin{lemma}\label{l:T0}
Define the family of operators $U_\lambda^0\colon
C[\pkl,\pk]\to C[\pkl,\pk]$, $\lambda>0$, by
\begin{equation}
U_\lambda^0f(x)=U_\lambda f(x)+f(\pkl)\psi_\lambda(x), \quad f\in C[\pkl,\pk],
\end{equation}
where $U_\lambda$ is as in \eqref{d:Ulam} and  $\psi_\lambda$ is a positive decreasing solution of \eqref{e:inv} satisfying $\psi_\lambda(0)=1/\lambda$.
Then $U_\lambda^0$ is the resolvent of the operator $(\G_0 ,\mathcal{D}(\G_0 ))$ as defined in \eqref{d:L0} and \eqref{e:sodef}, i.e. $U_\lambda^0=(\lambda
I-\G_0 )^{-1}$, $\lambda>0$.
\end{lemma}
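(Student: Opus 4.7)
The plan is to verify directly that $U_\lambda^0$ maps $C[0,\infty]$ into $\mathcal{D}(\G_0)$ and satisfies $(\lambda I - \G_0)U_\lambda^0 f = f$ for every $f \in C[0,\infty]$. Since $\G_0$ generates a positive contraction $C_0$-semigroup on $C[0,\infty]$, we already know $(0,\infty) \subset \rho(\G_0)$, and these two identities therefore pin $U_\lambda^0$ down as $R(\lambda,\G_0)$.

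Fix $f \in C[0,\infty]$ and set $g := U_\lambda^0 f = U_\lambda f + f(0)\psi_\lambda$. The Green's function representation \eqref{e:Green}, combined with $\psi_{+}(0) = 0$, gives $G_\lambda(0,y) = 0$ for all $y > 0$, whence $U_\lambda f(0) = 0$. Using $\psi_\lambda(0) = 1/\lambda$, this yields $g(0) = f(0)/\lambda$, so
\[
g - g(0) = U_\lambda f + f(0)\bigl(\psi_\lambda - 1/\lambda\bigr)
\]
vanishes at $0$ and lies in $C[0,\infty] \cap C^2(0,\infty)$. On $(0,\infty)$, using that $\G$ annihilates constants, \eqref{e:rez} in the form $\G U_\lambda f = \lambda U_\lambda f - f$, and $\G \psi_\lambda = \lambda \psi_\lambda$ from \eqref{e:inv}, one computes
\[
\G(g - g(0)) = (\lambda U_\lambda f - f) + \lambda f(0)\psi_\lambda = \lambda g - f.
\]
Taking $x \to 0^+$ gives $\lambda g(0) - f(0) = 0$, so $\G(g - g(0)) \in C_0(0,\infty]$, confirming $g - g(0) \in \mathcal{D}(\G)$ and therefore $g \in \mathcal{D}(\G_0)$ via \eqref{d:L0}. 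The resolvent identity now drops out:
\[
(\lambda I - \G_0)g = \lambda g - \G(g - g(0)) = \lambda g - (\lambda g - f) = f.
\]

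The only delicate step is verifying that $g - g(0) \in \mathcal{D}(\G)$; the boundary cancellation $\G(g - g(0))(0^+) = 0$ is precisely what the normalization $\psi_\lambda(0) = 1/\lambda$ is engineered to produce, and the condition at $\infty$ is automatic because both $U_\lambda f$ and $\psi_\lambda$ belong to $C[0,\infty]$. An alternative route would invoke the Laplace transform identity $R(\lambda,\G_0)f(x) = \int_0^\infty e^{-\lambda t} T_0(t)f(x)\,dt$, split the integrand using \eqref{d:To} to extract $U_\lambda f(x) + f(0)\int_0^\infty e^{-\lambda t} e^{-c_t x}\,dt$, and identify the remaining integral with $\psi_\lambda(x)$ by comparing \eqref{e:ext0} with the Kummer integral representation \eqref{e:chfU} from Remark~\ref{r:kummer}; the direct argument above is shorter.
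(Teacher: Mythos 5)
Your proof is correct, but it runs in the opposite direction from the paper's. The paper computes the resolvent as the Laplace transform of the semigroup: starting from $R(\lambda,\G_0)f(x)=\int_0^\infty e^{-\lambda t}T_0(t)f(x)\,dt$, it splits the integrand via \eqref{d:To} into the absorbed and non-absorbed parts, recognizes the first term as $U_\lambda f(x)$ through the Green's function, and identifies $\tfrac{1}{\lambda}\E_x(e^{-\lambda\tau_0})$ with $\psi_\lambda(x)$ by matching the integral formula \eqref{e:ext0} against the Kummer representation \eqref{e:chfU} of $\psi_-$ --- i.e.\ exactly the ``alternative route'' you sketch in your closing paragraph. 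You instead verify directly that $g=U_\lambda^0f$ lies in $\mathcal{D}(\G_0)$ and satisfies $(\lambda I-\G_0)g=f$, using $\G U_\lambda f=\lambda U_\lambda f-f$ from \eqref{e:rez}, $\G\psi_\lambda=\lambda\psi_\lambda$ from \eqref{e:inv}, and the fact that $\G$ annihilates constants; the normalization $\psi_\lambda(\pkl)=1/\lambda$ then forces the boundary cancellation $\G(g-g(\pkl))(0^+)=\lambda g(\pkl)-f(\pkl)=0$ needed for membership in $\mathcal{D}(\G)$ via \eqref{d:L0}, and injectivity of $\lambda I-\G_0$ (available since $\G_0$ was already shown to generate a contraction semigroup) pins $U_\lambda^0$ down as the resolvent. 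Your route is purely analytic and makes transparent \emph{why} the constant $1/\lambda$ is the right normalization; the paper's route is shorter given the probabilistic formulas already established in Lemma~\ref{l:fdc} and additionally yields the identification $\psi_\lambda(x)=\tfrac{1}{\lambda}\E_x(e^{-\lambda\tau_0})$, which has independent interest. The one point worth stating explicitly in your write-up is that $U_\lambda f\in C^2(0,\infty)$ and solves \eqref{e:rez} classically for merely continuous $f$ --- this is asserted in the paper just after \eqref{d:Ulam}, so you may cite it, but your argument silently relies on it.
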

\begin{proof} Let $f\in C[0,\infty]$. It follows from \eqref{d:To} and \eqref{e:ext0} that
\[
\int_0^{\infty} e^{-\lambda t}T_0(t)f(x)dt=\int_0^\infty \int_0^{\infty} e^{-\lambda t}p_t(x,y) dt f(y)dy+f(0)\frac{1}{\lambda}\mathbb{E}_{x}(e^{-\lambda \tau_0}).
\]
We have $\psi_{\lambda}=\psi_{-}$ and $\psi_\lambda(x)=\frac{1}{\lambda}\mathbb{E}_{x}(e^{-\lambda \tau_0})$, by Lemma~\ref{l:fdc} and equation~\eqref{e:chfU}. Consequently,
\[
U_\lambda^0 f(x)=\int_0^{\infty} e^{-\lambda t}T_0(t)f(x)dt
\]
for all $x\ge0$.
\end{proof}

\section{Stochastic semigroup for the Feller diffusion}\label{sec:semigroup-feller}

The diffusion process is not conservative on $(0,\infty)$.  Thus to identify the stochastic semigroup connected
with the diffusion on $[0,\infty)$ we have to consider the space $L^1(m)=L^1([0,\infty),m)$, where $m$ is the
measure on $E=[0,\infty)$ equal to the sum of the Lebesgue measure on $(0,\infty)$ and the Dirac measure
$\delta_0$ at $0$. We can identify $L^1(m)$ with the product space $L^1(0,\infty)\times \mathbb{R}$. Thus every element $g\in L^1(m)$ can be written as $g=(u,v)$ with $u\in L^1(0,\infty)$ and $v\in \mathbb{R}$, we write
$g(x)=u(x)$ for $x>0$ and $g(0)=v$,
and we have
\[
\int_{[\pkl,\pk)}g(x)m(dx)=\int_{\pkl}^{\pk}g(x) dx +g(\pkl).
\]
If $\mu$ is a probability distribution of a nonnegative random variable $\xi$ which is absolutely
continuous with respect to $m$, so that there is a nonnegative $g\in L^1(m)$ such that
\[
\mu(B)=\int_B g(x)m(dx),\quad B\in \mathcal{B}([0,\infty)),
\]
and $\mu([0,\infty))=1$, then we say that $g$ is the \emph{density of $\xi$ with respect to $m$}.

 Let $X=\{X_t\}_{t\ge 0}$ be the Feller diffusion as in \eqref{e:feller-sde} defined on
$[0,\infty)$ and $\{T_0(t)\}_{t\ge 0}$ be the semigroup defined by \eqref{d:To}.
 We have the representation
\begin{equation*}
T_0(t)f(x)=\int_{[0,\infty)}p_t^0(x,y)f(y)m(dy),\quad x\in [0,\infty),t>0, 
\end{equation*}
for bounded Borel measurable functions, where
\begin{equation}
p_t^0(x,y)=\left\{
             \begin{array}{ll}
             p_t(x,y), & x\ge 0,y>0, \\
               e^{-c_tx}, & x\ge 0, y=0.\\
             \end{array}
           \right.
           \label{d:pto}
\end{equation}
If $\mu_0$ is the distribution of $X_0\ge 0$ then the distribution of $X_t$ is given by
\[
\mu_t(B):=\Prob_{\mu_0}(X_t\in B)=\int_{[0,\infty)}T_0(t)1_B(x)\mu_0(dx), \quad t>0, B\in
\mathcal{B}([0,\infty)).
\]
Hence, if there is a nonnegative $g\in L^1([0,\infty),m)$ such that
\[
\mu_0(B)=\int_B g(x)m(dx),
\]
then
\[
\begin{split}
\mu_t(B)&=\int_{[0,\infty)}\int_Bp_t^0(x,y)m(dy)g(x)m(dx)
= \int_{B}\int_{[0,\infty)}p_t^0(x,y)g(x)m(dx)m(dy)
\end{split}
\]
which shows that $\mu_t$  is absolutely continuous with respect to $m$ and
the density of  $X_t$ is given by
\begin{equation}
P_0(t)g(y)=\int_{[0,\infty)}p_t^0(x,y)g(x)m(dx),\quad y\ge 0.
\label{e:dbP0}
\end{equation}

We now show that $\{P_0(t)\}_{t\ge 0}$ is a stochastic semigroup on $L^1(m)$ and we find its generator.
To this end let us define the second order differential operator $A$ by
\begin{equation}\label{d:oA}
A f(x)=\frac{d}{dx}\left(\frac{d}{dx}(\beta(x) f(x))-\alpha(x) f(x)\right),\quad
\beta(x)=\frac{\sigma^2}{2}x,\,\,\alpha(x)=-\gamma x,
\end{equation}
which is meaningful for any locally integrable  function $f$ for which $\beta f$ and $(\beta f)'-\alpha f$ are
absolutely continuous.
We consider $A$ on the maximal domain
\[
\mathcal{D}_{M}(A)=\{f\in L^1(0,\infty):  A f\in L^1(0,\infty)\}.
\]
Since $A f$ is integrable for $f\in\mathcal{D}_M(A)$, the limits
\begin{equation}
n_0(f):=\lim_{x\to 0^+}((\beta f)'(x)-\alpha(x)f(x))\quad \text{and}\quad n_\infty(f):=\lim_{x\to
+\infty}((\beta f)'(x)-\alpha(x)f(x))
\end{equation}
exist and are finite. Hence
\[
\int_{0}^\infty A f(x)dx=n_\infty(f)-n_0(f),\quad f\in \mathcal{D}_M(A).
\]
Since  $\s',\nu'\not\in L^1(1,\infty)$, we have $n_\infty(f)=0$ for $f\in \mathcal{D}_M(A)$, see e.g. \cite[Lemma 1.2]{attalienti04}.

\begin{theorem}\label{t:gdp} Let $\{P_0(t)\}_{t\ge 0}$ be defined as in \eqref{e:dbP0}. Then $\{P_0(t)\}_{t\ge 0}$ is a stochastic semigroup on $L^1(m)$ and  its generator
is  the operator $A_0$ defined by
\begin{equation}
A_0g(x)=A(u)(x), \quad x>0, \quad A_0g(0)=n_0(u)
\label{d:A0}
\end{equation}
for $g(x)=u(x)$ for $x>0$ with $u\in \mathcal{D}_M(A)$.
\end{theorem}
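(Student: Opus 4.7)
The plan is to proceed in two stages: first verify that $\{P_0(t)\}_{t\ge 0}$ is a stochastic semigroup on $L^1(m)$, and then identify its generator as $A_0$ via a resolvent/duality argument with $\{T_0(t)\}_{t\ge 0}$.

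For the semigroup properties, positivity is immediate from $p_t^0\ge 0$, and the stochasticity $\|P_0(t)g\|_1=\|g\|_1$ for $g\ge 0$ reduces by Fubini to the identity $\int_{[0,\infty)}p_t^0(x,y)\,m(dy)=(1-e^{-c_tx})+e^{-c_tx}=1$, which is a direct consequence of \eqref{e:T0} and \eqref{d:pto}. The semigroup law $P_0(t+s)=P_0(t)P_0(s)$ is the Chapman--Kolmogorov equation for $p_t^0$; equivalently, it is dual to the already-established semigroup property of $T_0(t)$ under the pairing
\[
\int_{[0,\infty)}P_0(t)g\cdot f\,dm=\int_{[0,\infty)}g\cdot T_0(t)f\,dm,\quad f\in C[0,\infty],\ g\in L^1(m).
\]
Strong continuity on $L^1(m)$ will follow from weak continuity in this pairing (inherited from strong continuity of $T_0(t)$ on $C[0,\infty]$), combined with norm-preservation on the positive cone and a standard density argument.

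For the generator, the idea is to compute the resolvent $R_\lambda g:=\int_0^\infty e^{-\lambda t}P_0(t)g\,dt$ and show that $R_\lambda=(\lambda I-A_0)^{-1}$, which by uniqueness of the generator forces the generator of $P_0$ to coincide with $A_0$. Taking the Laplace transform of the duality identity and invoking Lemma~\ref{l:T0} gives
\[
\int_{[0,\infty)}R_\lambda g\cdot f\,dm=\int_{[0,\infty)}g\cdot U_\lambda^0 f\,dm,\quad f\in C[0,\infty].
\]
The crucial step is the formal adjoint relation
\[
\int_{[0,\infty)}A_0 h\cdot f\,dm=\int_{[0,\infty)}h\cdot \G_0 f\,dm,\quad h\in\mathcal{D}(A_0),\ f\in\mathcal{D}(\G_0).
\]
Once this is available, substituting $f=U_\lambda^0\phi$ and using $(\lambda-\G_0)U_\lambda^0\phi=\phi$ yields $R_\lambda(\lambda I-A_0)h=h$ for $h\in\mathcal{D}(A_0)$; combined with the general identity $(\lambda I-B)R_\lambda=I$ for the (unknown) generator $B$ of $P_0$ and injectivity of $R_\lambda$, this forces $B=A_0$.

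The main obstacle is the adjoint relation itself, because the integration by parts must match the Dirac atom of $m$ at $0$ with the boundary behaviour of $\G_0$ and $A$. Writing $h\in\mathcal{D}(A_0)$ as $(u,v)$ with $u\in\mathcal{D}_M(A)$ and $v=h(0)$, two integrations by parts on $(0,\infty)$ will convert $\int_0^\infty u\cdot\G f\,dx$ into $\int_0^\infty Au\cdot f\,dx$ plus the boundary term $[\beta u f'-((\beta u)'-\alpha u)f]_0^\infty$. The contribution at $\infty$ vanishes thanks to $n_\infty(u)=0$ (the natural-boundary condition $\s',\nu'\notin L^1(1,\infty)$ recalled in the text) and the boundedness of $f$, while at $x=0$ the term $\beta(0)u(0)f'(0)$ drops out since $\beta(0)=0$, leaving exactly $n_0(u)f(0)$. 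This matches precisely the contribution $A_0h(0)\cdot f(0)=n_0(u)f(0)$ of the Dirac atom of $m$ at $0$ on the left-hand side, which is why the definition \eqref{d:A0} produces the correct generator.
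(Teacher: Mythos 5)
Your overall strategy --- verify the stochastic-semigroup axioms directly from the kernel $p_t^0$ and then identify the generator through its resolvent by duality with $\{T_0(t)\}_{t\ge 0}$ and $U_\lambda^0$ --- is a genuinely different route from the paper's. The paper instead invokes the Feller--Hille theorem that $(A,\mathcal{D}_M(A))$ generates a substochastic $C_0$-semigroup $\{S(t)\}_{t\ge 0}$ on $L^1(0,\infty)$ with resolvent kernel $G_\lambda$, writes $P_0(t)g=S(t)u$ on $(0,\infty)$ together with an explicit mass-balance formula for the atom at $0$, and then differentiates at $t=0$; your route, if completed, would be more self-contained but must then supply the hard analytic facts that the citation provides for free.

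As written, the proposal has three genuine gaps. First, strong continuity: continuity of $t\mapsto\int P_0(t)g\cdot f\,dm$ for $f\in C[0,\infty]$ is only weak-$*$ convergence of the measures $P_0(t)g\,dm$, and together with norm preservation this does \emph{not} imply $L^1(m)$-convergence of the densities (oscillating densities converge weak-$*$ with constant norm but not in $L^1$); the standard theorem that weakly continuous semigroups are strongly continuous requires weak continuity against all of $L^\infty(m)$, not merely against $C[0,\infty]$. The paper gets continuity from $\|P_0(t)g-g\|_1\le 2\|S(t)u-u\|_{L^1(0,\infty)}$, which rests on $\{S(t)\}_{t\ge0}$ being a $C_0$-semigroup. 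Second, in the integration by parts the boundary term at infinity also contains $\beta(x)u(x)f'(x)=\tfrac{\sigma^2}{2}xu(x)f'(x)$, which is controlled neither by $n_\infty(u)=0$ nor by boundedness of $f$; you would need growth information on $f'$ for $f\in\mathcal{D}(\G_0)$, or better, avoid integration by parts entirely by reading the adjointness of $R(\lambda,A)$ and $U_\lambda$ off the shared Green's function \eqref{e:Green} (this is essentially how the paper obtains the corresponding duality in the proof of Theorem~\ref{t:Pt}). Third, the endgame: $R_\lambda(\lambda-A_0)h=h$ for $h\in\mathcal{D}(A_0)$ yields only $A_0\subseteq B$ for the true generator $B$, and injectivity of $R_\lambda$ does not upgrade this to equality. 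You additionally need $\lambda\in\rho(A_0)$, i.e.\ surjectivity of $\lambda-A_0$ (equivalently, that $R_\lambda$ maps $L^1(m)$ into $\mathcal{D}(A_0)$); the paper closes exactly this loop by observing that $1\in\rho(A)$ implies $1\in\rho(A_0)$.
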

\begin{proof} It is known  (see e.g. \cite[Theorem 15.2]{feller52}, \cite[Theorem 8.5]{hille54})
 that the linear operator $A\colon \mathcal{D}_{M}(A)\to L^1(\pkl,\pk)$  is the generator of a substochastic semigroup $\{S(t)\}_{t\ge 0}$  on $L^1(\pkl,\pk)$.
The resolvent at $\lambda>0$ of the operator $(A,\mathcal{D}_M(A))$ is given by
\[
R(\lambda,A) f(y)=\int_{\pkl}^{\pk}G_\lambda(x,y)f(x)dx,
\]
where $G_\lambda$ is as in \eqref{e:Green}. Thus
\[
S(t)f(y)=\int_{0}^\infty p_t(x,y)f(x)dx,\quad f\in L^1(0,\infty).
\]
Hence, for $g$ with $g(x)=u(x)$ for $x>0$ and $u\in L^1(0,\infty)$ and $g(0)=v$ with $v\in \mathbb{R}$, we  have
\[
P_0(t)g(y)=S(t)u(y),\quad  y\in (0,\infty),
\]
and
\[
P_0(t)g(0)=\int_{0}^\infty p_t^0(x,0)u(x)dx +p_t^0(0,0)v.
\]
From \eqref{d:pto} and \eqref{e:T0} it follows that
\[
\int_{0}^\infty p_t^0(x,0)u(x)dx=\int_{0}^\infty g(x)dx -\int_{0}^\infty\int_{0}^\infty p_t(x,y)dy u(x)dx,
\]
which implies that
\[
P_0(t)g(0)=\int_{0}^\infty g(x)m(dx)-\int_{0}^\infty S(t)u(y) dy.
\]
We have
\[
|P_0(t)g(0)-g(0)|=\left|\int_{0}^\infty u(y)dy-\int_{0}^\infty S(t)u(y) dy\right|\le \|u-S(t)u\|.
\]
Consequently, $\{P_0(t)\}_{t\ge 0}$ is a stochastic semigroup  on $L^1(m)$.

Let $(\widetilde{A_0},\mathcal{D}(\widetilde{A_0}))$ be the generator of $\{P_0(t)\}_{t\ge 0}$. It remains to show that
$\widetilde{A_0}=A_0$ and $\mathcal{D}(\widetilde{A_0})=\mathcal{D}(A_0)$.
Observe that
\[
A_0g(0)=-\int_{0}^\infty A u(y)dy
\]
if $g(x)=u(x)$ for $x>0$ with $u\in \mathcal{D}_M(A)$.
Since the operator $(A,\mathcal{D}_M(A))$ is a generator, it is closed and
we see that the operator $(A_0,\mathcal{D}(A_0))$ is closed.
We have
\[
\int_{\pkl}^{\pk} \left|\frac{S(t)u(x)-u(x)}{t}-Au(x)\right|dx \to 0, \quad
\text{as }t\to 0^{+},
\]
for $u\in \mathcal{D}_M(A)$. We also obtain
\[
\frac{P_0(t)g(\pkl)-g(\pkl)}{t}=\frac{1}{t}\left(\int_{\pkl}^\pk  u(x)dx-\int_{\pkl}^{\pk}  S(t)u(x)dx
\right)\to - \int_{\pkl}^{\pk}  A u(x)dx.
\]
This shows that
\[
\left\|\frac{P_0(t)g-g}{t}-A_0(g)\right\|_1\to 0,\quad \text{as }t\to 0^{+}
\]
for all $g\in\mathcal{D}(A_0)$. Hence $ \mathcal{D}(A_0)\subseteq \mathcal{D}(\widetilde{A_0})$ and
$A_0g=\widetilde{A_0}g$ for $g\in\mathcal{D}(A_0)$. Finally, since $A$ is the generator of a substochastic semigroup, we have $1\in \rho(A)$. It is easily seen that $1\in \rho(A_0)$, implying  $\mathcal{D}(\widetilde{A_0})=\mathcal{D}(A_0)$.
\end{proof}

\begin{remark}
Observe that
\[
R(\lambda,A_0)g(x)=R(\lambda,A) u(x),\quad x\in \X,
\]
when $g(x)=u(x)$ for $x>0$, and
\[
R(\lambda,A_0)g(\pkl)=\frac{1}{\lambda}\int_{[\pkl,\pk)}g(y)m(dy)-\int_{\pkl}^{\pk} R(\lambda,A_0)g(y) dy,\quad g\in L^1(m).
\]
\end{remark}

\section{Diffusion with jumps}\label{sec:diffusion-jumps}

Let $Z=\{Z_t\}_{t\ge 0}$ be a jump-diffusion Markov process as given by the solutions of the  equation
\eqref{eq_jumpdiff}. The process $Z$ has values in $[0,\infty)$ and it is strong Markov.  For a general approach to the problem of existence and nonnegativity of solutions to equations such as \eqref{eq_jumpdiff} we refer the reader to \cite{fuli10}. We first provide a heuristic description of a construction of solutions of \eqref{eq_jumpdiff}.
Let $\{\varepsilon_k\}_{k\ge 1}$ be a sequence of
positive independent random variables with probability density function $h$, which are also independent of
$Z_0$ and of Brownian motion $\{W_t\}_{t\ge 0}$. We assume that $\varphi$ is a nonnegative, continuous and bounded function defined on $E=[0,\infty)$. We define the process
\[
\Lambda_t=\int_{0}^t\varphi(X_s)ds,
\]
where $X$ is the Feller diffusion, and we assume that it is such that for any $x\in (0,\infty)$
\begin{equation}
\Prob_x(\lim_{t\to\infty}\Lambda_t=\infty)=1,\quad x\ge 0.
\label{e:Lambdat}
\end{equation}
We can construct the process $Z$ in \eqref{eq_jumpdiff} as follows. Let $t_0=0$ and $Z_0=x$. Given the solution $X_t$  of
\eqref{e:feller-sde} with initial condition $X_0=x$, we let $t_1=t_0+\Delta t_1$, where $\Delta t_1$ is a random
variable such that
\[
\Pr(\Delta t_1> t|X_0=x)=\mathbb{E}_x(e^{-\Lambda_t}).
\]
Starting from $Z_0=x$ we define $Z_t$ to be $X_t$ for $t<t_1$ and
\begin{equation}
Z_{t_1}=X_{t_1}+\varepsilon_1.
\end{equation}
We restart the process from $X_0=Z_{t_1}$ by following the path of the diffusion up to the next jump time
$t_2=t_1+\Delta t_2$ and at the jump time $t_2$ we add $\varepsilon_2$, and so on. In this way we define a sequence of jump times $(t_n)_{n\ge 1}$ such that $Z_{t}=X_{t-t_n}$ for $t\in [t_n,t_{n+1})$, where $X_t$ is the Feller diffusion starting at $X_0=Z_{t_n}$ and
\[
Z_{t_{n+1}}=X_{t_{n+1}-t_n}+\varepsilon_{n+1},\quad n\ge 0.
\]
Observe that if $\varphi(0)>0$ then \eqref{e:Lambdat} holds, since $\mathbb{P}_x(\tau_0<\infty)=1$ and $X_t=0$ for $t>\tau_0$.

Recall that an operator  $\widetilde{L}$ is the extended generator of the $E$-valued Markov process $Z$ as in \eqref{eq_jumpdiff}, if
its domain  $\mathcal{D}(\widetilde{L})$ consists of those measurable $f\colon E\to \mathbb{R}$ for which there exists a measurable $\widetilde{f}\colon E\to \mathbb{R}$ such that for each $z\in E$, $t>0$,
\[
\mathbb{E}_z(f(Z_t))=f(z)+\mathbb{E}_z\left(\int_{0}^t \widetilde{f}(Z_s)\,ds\right)
\]
and
\[
\int_{0}^t \mathbb{E}_z(|\widetilde{f}(Z_s)|)ds<\infty,
\]
in which case we define $\widetilde{L}f=\widetilde{f}$. It follows from \eqref{eq_jumpdiff} and the generalized It\^{o} formula that
\begin{equation}
\widetilde{L}f(x)=\frac{1}{2}\sigma^2 x f''(x)-\gamma xf'(x) +\int_0^{\infty}(f(x+y)-f(x))\varphi(x)h(y)dy
\label{d:exge}
\end{equation}
for $f\in C^2[0,\infty)$ satisfying
\begin{equation}\label{d:exged}
\mathbb{E}_z\left(\sum_{t_n\le t}|f(Z_{t_n})-f(Z_{t_n^-})|\right)<\infty
\end{equation}
for all $t>0$ and $z$.

We now show that there is a stochastic semigroup $\{P(t)\}_{t\ge 0}$ on $L^1(m)$ such that for any Borel set $F\subset E$  and any density $g\in L^1(m)$ we have
\begin{equation}
\int_E \mathbb{P}_z(Z_t\in F)g(z)m(dz)=\int_F P(t)g(x)m(dx),\quad t\ge 0.
\label{e:absc}
\end{equation}
Note that if $\xi$ and $\varepsilon$ are independent random
variables,  $\xi$ has the distribution with density $g\in L^1(m)$
and $\varepsilon$ is distributed with density $h$ on $(0,\infty)$
\[
\int_{0}^\infty h(y)dy=1,
\]
then the
distribution of $\xi+\varepsilon$ has the density $P g\in L^1(m)$ of the form
\begin{equation}
P g(x)=\int_{[0,x]}h(x-y)g(y)m(dy)=g(0)h(x)+\int_{0}^x h(x-y)g(y)dy.
\label{e:oP}
\end{equation}
The probability density function $h$ of the random variable $\varepsilon$ can be formally extended to $[0,\infty)$ by setting $h(0)=0$, so that $h\in L^1(m)$ represents a density.
Note that $P$ is a stochastic operator on $L^1(m)$, since
\[
\int_{[0,\infty)}P g(x)m(dx)=g(0)+\int_{0}^\infty\int_{0}^x h(x-y)g(y)dydx=g(0)+\int_{0}^\infty g(y)dy.
\]

\begin{theorem}\label{t:Pt}
Assume that  $\varphi\in C[0,\infty]$ and $\varphi(0)>0$. Suppose that $Z_0$ has a density $g$ with respect to $m$. Then the distribution of
$Z_t$ has a density $P(t)g$ with respect to $m$, where $\{P(t)\}_{t\ge 0}$ is a stochastic semigroup on $L^1(m)$ with
generator
\begin{equation}
G g=A_0 g-\varphi g+P(\varphi g), \quad  g\in \mathcal{D}(A_0),
\label{e:genP}
\end{equation}
the operator $(A_0,\mathcal{D}(A_0))$ is  as in Theorem~\ref{t:gdp} and $P$ is the stochastic operator as
in~\eqref{e:oP}.
\end{theorem}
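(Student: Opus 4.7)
The plan is to realize $\{P(t)\}_{t \ge 0}$ as a bounded perturbation of the Feller stochastic semigroup $\{P_0(t)\}_{t \ge 0}$ from Theorem \ref{t:gdp}, and then to identify the resulting semigroup probabilistically with the process $Z$. Since $\varphi \in C[0,\infty]$ is bounded, the multiplication $g \mapsto \varphi g$ is a bounded positive operator on $L^1(m)$ with norm at most $\|\varphi\|_\infty$; combined with the stochastic operator $P$ from \eqref{e:oP}, this makes $B g := -\varphi g + P(\varphi g)$ a bounded linear operator on $L^1(m)$. The bounded perturbation theorem for generators of $C_0$-semigroups then yields that $G := A_0 + B$ with $\mathcal{D}(G) = \mathcal{D}(A_0)$ generates a $C_0$-semigroup $\{P(t)\}_{t \ge 0}$ on $L^1(m)$.

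For positivity and stochasticity, I would split $Gg = (A_0 g - \varphi g) + P(\varphi g)$. Since $\varphi \ge 0$ is bounded, $A_0 - \varphi I$ generates a positive substochastic semigroup $\{\widetilde{S}(t)\}_{t \ge 0}$ on $L^1(m)$ corresponding to the Feller diffusion killed at rate $\varphi$, and the Dyson-Phillips expansion
\[
P(t) = \sum_{n=0}^{\infty} R_n(t), \qquad R_0(t) = \widetilde{S}(t), \qquad R_{n+1}(t) g = \int_0^t R_n(t-s) P\bigl(\varphi \widetilde{S}(s) g\bigr) \, ds,
\]
exhibits each $P(t)$ as a sum of positive operators. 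For stochasticity, observe that for $g \in \mathcal{D}(A_0)$ with $g \ge 0$,
\[
\int_{[0,\infty)} G g \, dm = \int_{[0,\infty)} A_0 g \, dm - \int_{[0,\infty)} \varphi g \, dm + \int_{[0,\infty)} P(\varphi g) \, dm = 0,
\]
because $\{P_0(t)\}$ is stochastic (so $\int A_0 g \, dm = 0$) and $P$ is a stochastic operator (so $\int P(\varphi g) \, dm = \int \varphi g \, dm$). Hence $\frac{d}{dt} \|P(t) g\|_1 = \int_{[0,\infty)} G P(t) g \, dm = 0$ on the positive cone of $\mathcal{D}(A_0)$, and a density argument extends $\|P(t) g\|_1 = \|g\|_1$ to all densities.

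The remaining and most delicate step is to identify $P(t) g$ with the density of $Z_t$ when $Z_0$ has density $g$. I would condition on $N_t$, the number of jumps of $Z$ in $[0,t]$, and proceed by induction on $n$. On $\{N_t = 0\}$, $Z_t$ coincides with the Feller diffusion killed at rate $\varphi$, and a Feynman-Kac computation identifies the subprobability density of $Z_t 1_{\{N_t = 0\}}$ with respect to $m$ as $\widetilde{S}(t) g = R_0(t) g$. For the inductive step, the strong Markov property at the first jump time $t_1 = s$ shows that, conditionally on survival in $[0,s)$ and a jump at $s$, the post-jump state $Z_s$ has subprobability density $P\bigl(\varphi \widetilde{S}(s) g\bigr)$ with respect to $m$, and thereafter $Z$ evolves as an independent copy restarted from this density; integrating in $s$ and invoking the inductive hypothesis identifies the density of $Z_t 1_{\{N_t = n+1\}}$ with $R_{n+1}(t) g$. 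Summing over $n$ then yields \eqref{e:absc}. The main difficulty lies in this bookkeeping on $L^1(m)$, carefully matching the iterated Feynman-Kac/strong-Markov decomposition of $Z$ with the Dyson-Phillips iterates; an alternative is to invoke uniqueness for the martingale problem associated with the extended generator \eqref{d:exge} and compare with $\{P(t)\}$ through duality.
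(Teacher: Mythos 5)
Your construction of the semigroup is essentially the paper's: both exploit boundedness of $\varphi$ to realize $G$ as a bounded perturbation of $A_0$ and obtain $\{P(t)\}_{t\ge 0}$ from a Dyson--Phillips series. The paper's splitting is slightly different and slicker on the positivity/stochasticity point: it writes $Gg=A_0g-\lambda g+\lambda P_\lambda g$ with $\lambda>\sup_x\varphi(x)$ and $P_\lambda g=(1-\varphi/\lambda)g+P(\varphi g/\lambda)$ \emph{stochastic}, so the Phillips perturbation theorem for stochastic semigroups \cite[Theorem 7.9.1]{almcmbk94} delivers positivity and conservation of mass in one stroke, with the unperturbed factor simply $e^{-\lambda t}P_0(t)$. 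Your splitting $G=(A_0-\varphi I)+P(\varphi\,\cdot)$ forces you to first justify that $A_0-\varphi I$ generates a positive substochastic (killed) semigroup and then to run the separate $\int Gg\,dm=0$ computation; both steps are correct (the latter because $n_\infty(u)=0$ and $P$ is stochastic), just more work. Where you genuinely diverge is the identification of $P(t)g$ with the law of $Z_t$: you propose a probabilistic induction on the number of jumps, matching each Dyson--Phillips iterate $R_n(t)$ with the subprobability density of $Z_t\mathbf{1}_{\{N_t=n\}}$ via Feynman--Kac for the zero-jump term and the strong Markov property at the first jump time. The paper instead argues by duality: it shows $\int f\,Gg\,dm=\int Lf\,g\,dm$ with $Lf=\G_0f-\lambda f+\lambda P_\lambda^*f$, checks that $L$ generates a positive contraction semigroup on $C[0,\infty]$ which agrees with the extended generator $\widetilde{L}$ of $Z$, and then approximates indicators of closed sets by Lipschitz functions. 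Your route buys a more explicit, self-contained probabilistic picture (each term of the series has a pathwise meaning), at the cost of the bookkeeping you acknowledge; the paper's route is shorter but leans on the assertion that the transition semigroup of $Z$ is determined by its generator on $\mathcal{D}(\G_0)$, which is close to the martingale-problem alternative you mention. Your sketch as written is a correct plan, but to be a complete proof the inductive step (in particular the precise Feynman--Kac identity $\E_x(f(Z_t)\mathbf{1}_{\{N_t=0\}})=\E_x\bigl(f(X_t)e^{-\Lambda_t}\bigr)$ and its transfer to densities on $L^1(m)$, including the atom at $0$) would need to be written out.
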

\begin{proof}
Since $\varphi\in C[0,\infty]$, $\varphi$ is bounded and we can write
\begin{equation*}
G g=A_0g-\lambda g+\lambda P_\lambda g, \quad g\in \mathcal{D}(A_0),
\end{equation*}
where $\lambda>0$ is any positive constant such that $\lambda>\sup_{x}\varphi(x)$, and $P_\lambda$ is a stochastic
operator of the form
\[
P_\lambda g=\left(1-\frac{\varphi}{\lambda}\right)g+P\left(\frac{\varphi}{\lambda} g\right).
\]
From the Phillips perturbation theorem \cite[Theorem 7.9.1]{almcmbk94}  it follows that $(G,\mathcal{D}(A_0))$ is the generator of the stochastic
semigroup $\{P(t)\}_{t\ge 0}$ given by
\begin{equation}
P(t)g= e^{-\lambda t}\sum_{n=0}^{\infty}\lambda^n S_n(t)g,
\label{Phil}
\end{equation}
where $S_0(t)=P_0(t)$ with $P_0(t)$ defined as in \eqref{e:dbP0} and
\[
S_{n+1}(t)g= \int_0^t S_0(t-s)P_\lambda S_n(s)g \,ds, \quad n\ge0.
\]
It follows from Lemma~\ref{l:T0} and Theorem~\ref{t:gdp} that
\[
\int_{[0,\infty)} f(x)A_0g(x)m(dx)=\int_{[0,\infty)} \G_0 f(x)g(x)m(dx)
\]
for all $f\in \mathcal{D}(\G_0)$ and $g\in \mathcal{D}(A_0)$, which implies that
\[
\int_{[0,\infty)}  f(x)G g(x)m(dx)=\int_{[0,\infty)} (\G_0 f(x)-\lambda f(x)+\lambda P_\lambda^*f(x)) g(x)m(dx),
\]
where
\[
P_\lambda^*f(x)=\left(1-\frac{\varphi(x)}{\lambda}\right)f(x)+\frac{\varphi(x)}{\lambda}\int_0^\infty f(x+y)h(y)dy
\]
for all bounded and measurable functions. Since $P_\lambda^*(C[0,\infty])\subseteq C[0,\infty]$, we conclude that the operator $Lf=\G_0 f-\lambda f+\lambda P_\lambda^*f$ is the generator of a positive contraction semigroup on $C[0,\infty]$.
Observe that for $f\in \mathcal{D}(\mathcal{L}_0)$ we have $Lf(x)=\widetilde{L}(f)(x)$ for $x\ge 0$, where $\widetilde{L}$ is the extended generator as in \eqref{d:exge}. This implies that
\begin{equation*}
Lf(x)=\lim_{t\to 0}\frac{\mathbb{E}_xf(Z_t)-f(x)}{t},\quad x\ge 0,
\end{equation*}
for $f\in \mathcal{D}(\mathcal{L}_0)$.
Consequently, we obtain
\[
\int_{[0,\infty)}f(x)P(t)g(x)m(dx)=\int_{[0,\infty)} \mathbb{E}_x(f(Z_t))g(x)m(dx)
\]
for all $f\in C[0,\infty]$ and $g\in L^1(m)$. Since an indicator function of a closed set can be approximated by globally Lipschitz continuous functions, equality \eqref{e:absc} holds for all closed sets, implying that \eqref{e:absc} holds for all Borel subsets of $[0,\infty)$ and completing the proof.
\end{proof}

\subsection{Long term behavior of the solutions}\label{ssec:long}

We now study the long term behavior of the semigroup $\{P(t)\}_{t\ge 0}$ with generator $G$ as in \eqref{e:genP}.   Let  $u(t,x)=P(t)g(x)$ for $x>0$ and $v(t)=P(t)g(0)$. Then
\begin{equation}
\begin{split}
 \dfrac{\partial u (t,x)}{\partial t} &=\frac{\sigma^2}{2}\frac{\partial^2 (xu(t,x))}{\partial x^2}+\gamma\dfrac{\partial (x u(t,x))}{\partial x} -\varphi(x) u(t,x)\\ &\quad  +  h(x)\varphi(0)v(t)+\int_0^{x} h(x-y) \varphi(y)u(t,y)dy,\\
 \dfrac{d v(t)}{d t}&=\lim_{x\to 0^+}\left(\frac{\sigma^2}{2}\frac{\partial (x u(t,x))}{\partial x}+\gamma x u(t,x)\right)-\varphi(0)v(t).
\end{split}
\label{eq:judi}
\end{equation}

\begin{theorem}
If the function $\varphi\in C[0,\infty]$ is such that $\varphi(0)>0$ and if $\int_0^\infty x h(x)dx<\infty$, then there exists a unique invariant density $g_*$ for the semigroup $\{P(t)\}_{t\ge 0}$, $g_*$ is strictly positive and the semigroup $\{P(t)\}_{t\ge 0}$ is asymptotically stable, i.e.,
\[
\lim_{t\to \infty}\int_{[0,\infty)} |P(t)g(x)-g_*(x)|m(dx)=0
\]
for all densities $g\in L^1(m)$.
\end{theorem}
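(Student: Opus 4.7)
The plan is to apply the Foguel-type dichotomy for partially integral stochastic semigroups: such a semigroup on $L^1(m)$ is either asymptotically stable or sweeping from compact sets. I shall verify that $\{P(t)\}_{t\ge 0}$ is partially integral with a minorising kernel that is strictly positive on $\X\times\X$, and then rule out the sweeping alternative via a Foster-Lyapunov drift condition built from the moment hypothesis $\int_0^\infty xh(x)\,dx<\infty$.

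For partial integrality, the Phillips representation \eqref{Phil} gives, by nonnegativity of every $S_n(t)$, the pointwise bound $P(t)g\ge e^{-\lambda t}P_0(t)g$. Restricted to $y\in\X$, Theorem~\ref{t:gdp} and \eqref{d:pto} reduce $P_0(t)g(y)$ to $\int_0^\infty p_t(x,y)g(x)\,dx$, so that the kernel $k(t,x,y):=e^{-\lambda t}p_t(x,y)$ minorises $\{P(t)\}$ and is strictly positive on $\X\times\X$ by \eqref{e:ptxy}. In particular, any invariant density $g_*$ satisfies
\[
g_*(y)=P(t)g_*(y)\;\ge\; e^{-\lambda t}\int_0^\infty p_t(x,y)g_*(x)\,dx\;>\;0,\qquad y\in\X,
\]
so strict positivity of $g_*$ on $\X$ will follow once existence is established.

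For the Lyapunov step, take $V(x)=x$. Using the extended generator \eqref{d:exge},
\[
\widetilde{L}V(x)=-\gamma x+\varphi(x)\int_0^\infty y h(y)\,dy\;\le\;M-\gamma V(x),\qquad M:=\|\varphi\|_\infty\int_0^\infty y h(y)\,dy<\infty.
\]
Localising by the stopping times $\tau_n=\inf\{t: Z_t\ge n\}$, applying Dynkin's formula to the truncations $V\wedge n$ (condition \eqref{d:exged} holds thanks to the moment hypothesis), and passing to the limit by Fatou and Gronwall yields the uniform moment bound
\[
\E_z V(Z_t)\;\le\; e^{-\gamma t}V(z)+\frac{M}{\gamma},\qquad z\in E,\ t\ge 0.
\]
Because the sublevel sets $\{V\le R\}=[0,R]$ are compact in $E$, this bound provides uniform tightness of $\{\Prob_z(Z_t\in\cdot)\}_{t\ge 0}$ and Krylov-Bogoliubov averaging produces an invariant probability measure, while simultaneously ruling out sweeping from the compact sets $[0,R]$ for large $R$.

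Combining the two ingredients, the dichotomy theorem for partially integral stochastic semigroups with a strictly positive minorising kernel on $\X\times\X$ (Pichor-Rudnicki-type, in the spirit of the authors' earlier work) forces asymptotic stability: the invariant density $g_*$ is unique, strictly positive, and $\|P(t)g-g_*\|_{L^1(m)}\to 0$ for every density $g\in L^1(m)$; positivity of the atomic component $g_*(0)$ follows from the invariance equation for $v$ in \eqref{eq:judi} together with the fact that $\varphi(0)>0$. The principal technical obstacle is the rigorous justification of Dynkin's formula for the unbounded function $V(x)=x\notin\mathcal{D}(G)$ identified in Theorem~\ref{t:Pt}; this requires the stopping-time localisation sketched above, and the moment assumption $\int yh(y)\,dy<\infty$ is used essentially to control the contribution of the jump term to $\widetilde{L}V$.
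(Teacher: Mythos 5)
Your overall strategy is the same as the paper's: minorise $\{P(t)\}$ by the first term of the Phillips series \eqref{Phil}, invoke the Foguel alternative (asymptotic stability or sweeping from compacts) for partially integral stochastic semigroups from \cite{rudnickityran17}, and exclude sweeping with the Lyapunov function $V(x)=x$ and the drift bound $\widetilde{L}V\le -\gamma V+c$, which is exactly where the hypothesis $\int_0^\infty yh(y)\,dy<\infty$ enters. Your treatment of that second half is sound (and in fact more careful than the paper about localisation in Dynkin's formula).

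There is, however, a genuine gap in the first half: you ignore the atom at $0$. The reference measure is $m=\mathrm{Leb}+\delta_0$, so $\{0\}$ has positive $m$-measure and $\X\times\X$ is \emph{not} of full $m\otimes m$-measure in $E\times E$; a minorising kernel that is strictly positive only on $\X\times\X$ does not satisfy the hypotheses of the dichotomy theorem you invoke. Concretely, two things must be checked at the atom. First, the kernel condition (condition (K) of \cite[Corollary 5.4]{rudnickityran17}) at $y_0=0$: the paper handles this using the atomic part of $p_t^0$, taking $\eta$ supported on $\{0\}$ with $\eta(0)=e^{-c_t\varepsilon-\lambda t}$, which has positive $m$-integral precisely because $m(\{0\})>0$. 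Second, irreducibility starting \emph{from} the atom: for a density $g$ concentrated on $\{0\}$ the zeroth Phillips term gives $S_0(t)g(y)=0$ for all $y>0$ (the point $0$ is absorbing for the Feller diffusion, $p_t(0,y)=0$), so your minorisation yields nothing; one must go to the term $S_1(t)g$ and use $P_\lambda g(x)\ge \varphi(0)g(0)h(x)/\lambda$ to see that mass leaves the atom. This is the place where $\varphi(0)>0$ is essential — without it the density $1_{\{0\}}$ is invariant and the claimed positivity of $g_*$ on $(0,\infty)$ could fail — whereas in your write-up $\varphi(0)>0$ appears only in the cosmetic remark about $g_*(0)$ at the end. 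Until the condition at $y_0=0$ and irreducibility from $\{0\}$ are verified, the dichotomy theorem cannot be applied and the conclusion does not follow.
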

\begin{proof}
It follows from \eqref{Phil} that
\[
P(t)g(x)\ge \int_{[0,\infty)}k(t,x,y)g(y)m(dy)
\]
for all nonnegative $g\in L^1(m)$, where
\[
k(t,x,y)=e^{-\lambda t}p_t^0(y,x),\quad x,y,t\ge 0.
\]
We first show that either $\{P(t)\}_{t\ge0}$ is asymptotically stable or it is sweeping from compact subsets of $[0,\infty)$, i.e.,
\begin{equation}
\lim_{t\to\infty}\int_FP(t)g(z)m(dz)=0
\label{e:sweep}
\end{equation}
for all compact sets $F\subset E$ and all densities $g\in L^1(m)$.
To this end we need to check, by \cite[Corollary 5.4]{rudnickityran17}, that the semigroup $\{P(t)\}_{t\ge 0}$
satisfies condition
\begin{enumerate}
\item[(K)] for every $y_0\ge 0$ we can find $\varepsilon >0$, $t>0$ and a measurable nonnegative function $\eta$ defined on $[0,\infty)$ such that $\int_{[0,\infty)} \eta(x)m(dx)>0$ and
    \[
    k(t,x,y)\ge \eta(x)1_{B(y_0,\varepsilon)}(y) \quad \text{for all }x,y\ge 0,
    \]
    where $B(y_0,\varepsilon)$ is an open ball in the space $[0,\infty)$ with center $y_0$ and radius $\varepsilon$,
\end{enumerate}
and that $\{P(t)\}_{t\ge 0}$
is irreducible, i.e.,
$\int_0^\infty P(t)g(x)dt>0$ for almost all $x\in [0,\infty)$ and any density $g$.
To check condition (K) observe that for every  $t>0$ the function $(x,y)\mapsto p_t(x,y)$ is strictly positive and continuous on $(0,\infty)\times (0,\infty)$.  Thus for every $y_0>0$ and $t>0$ we can find constants $\varepsilon>0$ and $c>0$ such that $k(t,x,y)\ge c$ for all $(x,y)\in (y_0-\varepsilon,y_0+\varepsilon)^2$. For $y_0=0$ there exists $\varepsilon >0$ such that we have $k(t,x,y)\ge \eta(x)$ for $y\in [0,\varepsilon)$, where $\eta(x)=0$ for $x>0$ and $\eta(0)=e^{-c_t\varepsilon-\lambda t}$.
We now show that $\{P(t)\}_{t\ge 0}$ is irreducible.
Note that for any $t>0$ and any density $g$ we have
\[
P_0(t)g(0)=\int_{[0,\infty)}e^{-c_t x}g(x)m(dx)>0,
\]
since $e^{-c_t x}>0$ and $g\neq 0$. Thus $P(t)g(0)>0$ for all $t>0$. If $m\{x>0:g(x)\neq 0\}>0$, then $P(t)g(x)>0$ for almost all $x>0$ and all $t>0$, since $k(t,x,y)>0$ for all $t,x,y>0$.  It remains to check positivity of $P(t)g$ when $g(x)=0$ for all $x>0$. Observe that $S_0(s)g(0)=g(0)$ and  $P_\lambda g(x)\ge \varphi(0)g(0)h(x)/\lambda$. Thus,
\[
S_1(t)g(x)=\int_0^t S_0(t-s)P_\lambda (S_0(s)g)(x)ds\ge \frac{\varphi(0)g(0)}{\lambda}\int_0^t \int_{[0,\infty)}p_t^0(y,x)h(y)dy ds>0
\]
for all $t>0$ and $x>0$. This together with \eqref{Phil} implies that $P(t)g(x)>0$ for all $t>0$ and $x>0$, completing the proof of irreducibility.

Next, we show that the process is not sweeping from compact subsets of $[0,\infty)$. Suppose, contrary to our claim, that the process is sweeping. It follows from \eqref{e:sweep} that for every compact set $F$ and every density $g$ we have
\[
\lim_{t\to\infty}\frac{1}{t}\int_0^t \int_E \mathbb{P}_z(Z_s\in F)g(z)m(dz)ds=0.
\]
The Chebyshev inequality implies that
\[
\mathbb{P}_z(Z_t\in F_a)\ge 1-\frac{1}{a}\mathbb{E}_zV(Z_t)
\]
for all $t>0$, $z\in E$, and $a>0$, where $V$ is a nonnegative measurable function and $F_a=\{z\in E: V(z)\le a\}$. To get a contradiction it is enough to show that
 \[
\limsup_{t\to\infty}\frac{1}{t}\int_0^t\int_E \mathbb{E}_zV(Z_s)g(z)m(dz)ds<\infty
\]
for a density $g$ and a continuous function $V$ such that each  $F_a$ is a compact subset of $E$ for all sufficiently large $a>0$. It follows from \eqref{d:exge} and \eqref{d:exged} that for $V(x)=x$ we have
\begin{equation}
\widetilde{L}V(x)=-\gamma V(x)+\varphi(x)\int_0^\infty yh(y)dy,\quad x>0.
\end{equation}
Since the function $\varphi$ is bounded, there exist a constant $c>0$ such that $\widetilde{L}V(x)\le -\gamma V(x)+c$ for $x>0$. Hence, we obtain
\[
0\le \mathbb{E}_zV(Z_t)\le z+ \mathbb{E}_z\int_0^t (-\gamma V(Z_s)+c)ds.
\]
Consequently,
\[
\frac{1}{t}\int_0^t\mathbb{E}_z V(Z_s) ds\le \frac{z}{\gamma t}+\frac{c}{\gamma}.
\]
Now, if we take a density $g\in L^1(m)$ such that $\int_0^\infty z g(z)dz<\infty$, then the claim follows, which completes the proof.
\end{proof}

\subsection{Limiting behavior of solutions with exponentially distributed bursting}\label{ssec:limit-expon}

We now look for an equation for a stationary density $g_*$ when $h$ is given by
   \begin{equation}
    h(x) = \dfrac 1 { b} e^{-x/{ b}}1_{(0,\infty)}(x),
    \label{eq:bursting-den}
    \end{equation}
with $b>0$.  We have selected the exponential form for $h$ because of the well documented \cite{blake03,cai,chubb,golding,raj,sigal,yu} exponential distribution of burst amplitudes in experimental studies.

\begin{proposition} Suppose that  $\varphi\in C[0,\infty]$ is such that $\varphi(0)>0$.
The stationary positive integrable solution  of \eqref{eq:judi}  is given by
\begin{equation}
u(x)=e^{-x/b}\frac{y(x)}{x},\quad x>0,  \quad v=n_0(u)/\varphi(0),
\label{e:invd}
\end{equation}
where $y$ is a positive solution of the  differential equation
\begin{equation}
y''(x)-\theta y'(x)=\frac{2\varphi(x)}{\sigma^2 x}y(x)
\quad \text{with}\quad \theta=\frac{1}{b}-\frac{2\gamma}{\sigma^2}
\label{e:chek}
\end{equation}
such that
\[
n_0(u)=\frac{\sigma^2}{2}\lim_{x\to
0}\left(y'(x)-\theta y(x)\right)\neq 0.
\]
\label{prop:1}
\end{proposition}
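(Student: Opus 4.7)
The plan is direct verification. Since the uniqueness of the stationary density is already established in the previous subsection, it suffices to check that the ansatz \eqref{e:invd}, with $y$ a positive solution of the ODE \eqref{e:chek} and $v=n_0(u)/\varphi(0)$, yields a solution of the time-independent form of \eqref{eq:judi}, namely
\[
\tfrac{\sigma^2}{2}(xu)''+\gamma (xu)'-\varphi u+h(x)\varphi(0)v+\int_{0}^{x}h(x-s)\varphi(s)u(s)\,ds=0
\]
together with the boundary identity $n_0(u)=\varphi(0)v$.

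The key observation is that the exponential kernel $h(x)=e^{-x/b}/b$ interacts perfectly with the ansatz $u=e^{-x/b}y/x$: the product satisfies $e^{s/b}u(s)=y(s)/s$, and the ODE \eqref{e:chek} rewrites the integrand in the convolution as an exact derivative,
\[
\varphi(s)\frac{y(s)}{s}=\tfrac{\sigma^2}{2}\bigl(y''(s)-\theta y'(s)\bigr).
\]
Integrating from $0$ to $x$ and multiplying by $h(x)$, the convolution collapses to $\frac{\sigma^2 e^{-x/b}}{2b}\bigl[(y'(x)-\theta y(x))-\lim_{s\to 0^+}(y'(s)-\theta y(s))\bigr]$. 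The contribution from the lower endpoint is exactly cancelled by the boundary source $h(x)\varphi(0)v$ once one sets $\varphi(0)v=\tfrac{\sigma^2}{2}\lim_{s\to 0}(y'(s)-\theta y(s))$, which is the prescription for $v$ in \eqref{e:invd}. Hence
\[
h(x)\varphi(0)v+\int_0^x h(x-s)\varphi(s)u(s)\,ds=\frac{\sigma^2 e^{-x/b}}{2b}\bigl(y'(x)-\theta y(x)\bigr).
\]

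Second I would compute the purely differential part. From $xu=e^{-x/b}y$ one gets $(xu)'=e^{-x/b}(y'-y/b)$ and $(xu)''=e^{-x/b}(y''-2y'/b+y/b^2)$. Using \eqref{e:chek} to substitute $\tfrac{\sigma^2}{2}y''=\tfrac{\sigma^2}{2}\theta y'+\varphi y/x$, and collecting coefficients via the identity $\theta=1/b-2\gamma/\sigma^2$, a bookkeeping exercise gives
\[
\tfrac{\sigma^2}{2}(xu)''+\gamma(xu)'-\varphi u=-\frac{\sigma^2 e^{-x/b}}{2b}\bigl(y'(x)-\theta y(x)\bigr),
\]
which is exactly the negative of the expression obtained in the previous step. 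Adding them gives zero, so the stationary equation holds on $(0,\infty)$.

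Finally I would verify the boundary relation: an entirely analogous expansion of $n_0(u)=\lim_{x\to 0^+}\bigl(\tfrac{\sigma^2}{2}(xu)'+\gamma xu\bigr)$, again using $\theta=1/b-2\gamma/\sigma^2$, yields $n_0(u)=\tfrac{\sigma^2}{2}\lim_{x\to 0}(y'-\theta y)$, consistent with the prescription of $v$. The condition $n_0(u)\neq 0$ guarantees $v\neq 0$, and positivity of $y$ together with positivity of $u$ pins down the sign. The computation itself is routine; the main point requiring care is not the algebra but the matching of the boundary constant through the identification $\varphi(0)v=\tfrac{\sigma^2}{2}\lim_{s\to 0}(y'-\theta y)$, and the verification that the singular structure of \eqref{e:chek} at $x=0$ is compatible with $y(x)/x$ being locally integrable (so that $u$ is genuinely a density).
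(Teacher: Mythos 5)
Your verification is correct: the algebra in both halves checks out (the convolution plus the boundary source collapse to $\tfrac{\sigma^2}{2b}e^{-x/b}(y'-\theta y)$, and the purely differential part equals its negative), and the boundary identity $n_0(u)=\tfrac{\sigma^2}{2}\lim_{x\to 0}(y'-\theta y)=\varphi(0)v$ comes out right. However, your route is the reverse of the paper's. The paper starts from $Gg=0$ with $G$ as in \eqref{e:genP}, integrates the equation once over $(0,x)$ — which absorbs the boundary term through $n_0(u)=\varphi(0)v$ and turns the difference of the gain and loss terms into a single convolution $-\int_0^x\varphi(y)u(y)e^{-(x-y)/b}dy$ — and then differentiates to eliminate that convolution, arriving at the Sturm--Liouville form \eqref{sturm}; the substitutions $f=e^{x/b}u$, $y=xf$ then produce \eqref{e:chek}. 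That direction \emph{derives} the ODE, i.e.\ shows that any stationary integrable solution must have the stated form. Your argument establishes only sufficiency and recovers the characterization by invoking the uniqueness of the invariant density from the asymptotic-stability theorem of the preceding subsection — a legitimate shortcut, and one that makes the cancellation mechanism (the exponential kernel turning $e^{x/b}\varphi u$ into the exact derivative of $\tfrac{\sigma^2}{2}(y'-\theta y)$) more transparent. The one point you should not leave as an aside is the domain issue: to conclude via uniqueness you must check that the constructed $g$ actually lies in $\mathcal{D}(A_0)$ (in particular $u\in\mathcal{D}_M(A)$, so $Au\in L^1$ and the limits $n_0(u)$, $n_\infty(u)$ exist) and is normalizable; you flag local integrability of $y(x)/x$ but this membership is what licenses the appeal to the uniqueness theorem. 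With that supplied, your proof is complete and equivalent in content to the paper's.
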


\begin{proof}
To find $g$ with $g(x)=u(x)$ for $x>0$ and $g(0)=v$  we need to solve
the equation $G g=0$, where $G$ is the generator of the semigroup $\{P(t)\}_{t\ge 0}$ given by \eqref{e:genP}. We have
\begin{equation}
G g(x)=A u(x)-\varphi(x) u(x)+\int_{0}^x \varphi(y) u(y)\frac{1}{b}e^{-(x-y)/b}dy+\varphi(0)v\frac{1}{b}e^{-x/b}=0
\label{e:ssol}
\end{equation}
for $x>0$ and $G g(0)=n_0(u)-\varphi(0)v=0$, where $A$ is as in \eqref{d:oA}.
Since
\[
A u(x)=\frac{d}{dx}\frac{1}{\s'(x)}\frac{d}{dx}(\beta(x) \s'(x) u(x)),\quad \text{where}\quad
\s'(x)=e^{-\int^x \frac{\alpha(z)}{\beta(z)}dz}
\]
and $\alpha(x)=-\gamma x$, $\beta(x)=\sigma^2 x/2$,
we get
\[
\int_0^xA u(z)dz=\frac{1}{\s'(x)}\frac{d}{dx}(\beta(x) \s'(x) u(x))-n_0(u).
\]
Observe that
\[
\int_0^x\left(\int_{0}^z \varphi(y) u(y)\frac{1}{b}e^{-(z-y)/b}dy-\varphi(z) u(z)\right)dz=-\int_0^x\varphi(y)
u(y)e^{-(x-y)/b}dy.
\]
This together with \eqref{e:ssol} gives
\[
\frac{1}{\s'(x)}\frac{d}{dx}(\beta(x) \s'(x) u(x))-\int_{0}^x \varphi(y) u(y)e^{-(x-y)/b}dy-n_0(u)e^{-x/b}=0
\]
for $x>0$.
Hence we obtain
\begin{equation}
A u(x)-\varphi(x) u(x)+\frac{1}{b}\frac{1}{\s'(x)}\frac{d}{dx}(\beta(x) \s'(x) u(x))=0
\label{e:st1}
\end{equation}
and, multiplying \eqref{e:st1} by $e^{x/b}$, leads to
\begin{equation}
\frac{d}{dx}\left(\frac{e^{x/b}}{\s'(x)}\frac{d}{dx}(\beta(x) \s'(x) u(x))\right)=\varphi(x) u(x)e^{x/b}.
\label{sturm}
\end{equation}
If we take
\[
\s_1'(x)=e^{-x/b}\s'(x), \quad f(x)=e^{x/b}u(x),
\]
then equation \eqref{sturm} becomes
\[
\frac{d}{dx}\left(\frac{1}{\s_1'(x)}\frac{d}{dx}(\beta(x) \s_1'(x) f(x))\right)=\varphi(x) f(x).
\]
Let
\[
A_1 f(x)=\frac{d}{dx}\left((\beta(x)f(x))'-\alpha_1(x)f(x)\right),\quad \alpha_1(x)=\alpha(x)+\frac{1}{b}\beta(x)=(\frac{\sigma^2}{2b}-\gamma)x.
\]
Then we have
\[
A_1 f(x)=\varphi(x) f(x).
\]
Taking $y(x)=x f(x)$ and dividing by $\sigma^2/2$, leads to \eqref{e:chek}. Finally, observe that
\[
\frac{\sigma^2}{2}(e^{-x/b}y(x))'+\gamma e^{-x/b}y(x)=e^{-x/b}\frac{\sigma^2}{2}\left(y'(x)-\theta y(x)\right),
\]
whence the formula for $n_0(u)$ is also valid.
\end{proof}

\subsection{Constant burst rate $\varphi$}\label{constant}
If $\varphi$ is constant on $(0,\infty)$ and equal to $\kappa\ge 0$ then
the general solution of \eqref{e:chek} with $\theta> 0$
is $y(x)=w(\theta x)$ with $w$ of the form (see Remark~\ref{r:kummer})
\[
w(z)=c_1z {}_1F_1(\mu+1,2,z)+c_2 U(\mu,0,z),\quad \mu=\frac{2\kappa}{\sigma^2\theta},
\]
where ${}_1F_1$, $U$ are the confluent hypergeometric functions of the first and the second type, respectively,
and $c_1,c_2$ are constants. Since both Kummer functions with given parameters are bounded near $0$, the integrable $u$ in \eqref{e:invd} has to be of the form
\begin{equation}
 u(x) = c_0 e^{-x/b} {}_1F_1(\mu+1,2,\theta x), \quad x>0,\label{s:constant}
\end{equation}
where $c_0$ is a  constant.
Note that if $\theta<0$ then, by the Kummer transformation
    \[
    {}_1F_1(a_1,b_1,z)=e^z {}_1F_1(b_1-a_1,b_1,-z),
    \]
 we get
\[
u(x)=c_0 e^{-2\gamma x/\sigma^2} {}_1F_1(1-\mu,2,-\theta x).
\]

Since we have $n_0(u)=c_0\sigma^2/2$ and
$n_0(u)=\varphi(0)v$, we obtain that 
\[
v=\frac{\sigma^2}{2\varphi(0)}c_0,
\]
where the constant $c_0$ should be chosen such that
\[
\int_{[0,\infty)}g(x)m(dx)=\int_0^\infty u(x)dx +v=1.
\]
Consequently, we take $c_0$ satisfying
\[
\frac{\sigma^2}{2\varphi(0)}c_0+c_0c=1,
\]
where the normalization constant $c$ can be determined analytically, and is
\[
c=\int_0^{\infty} e^{-x/b} {}_1F_1(\mu+1,2,\theta x) dx=\left\{
                                                \begin{array}{ll}
                                                \frac{1}{b} {}_2F_1(\mu+1,1;2;b\theta) , & \hbox{ if }\theta>0 \\
                                                \frac{2\gamma}{\sigma^2} {}_2F_1(1-\mu,1;2;-\frac{2\gamma}{\sigma^2}\theta)  , & \hbox{ if }\theta<0,
                                                \end{array}
                                              \right. \]
with ${}_2F_1$ being  Gauss' hypergeometric function
\[
{}_2F_1(a_1,a_2;b_1;z)=\sum_{n=0}^\infty \frac{(a_1)_n (a_2)_n}{(b_1)_n}\frac{z^n}{n!}.
\]

When $\theta=0$ then equation \eqref{e:chek} has a solution of the form
\[
y(x)=\frac{2\kappa}{\sigma^2}x \sum_{n=0}^\infty\frac{(\frac{2\kappa}{\sigma^2}x)^n}{(n+1)!n!}.
\]
Thus
\[
u(x) = c_0  e^{-x/b} {}_0F_1(2, \frac{2\kappa}{\sigma^2} x),
\]
where  ${}_0F_1$ is the confluent hypergeometric limit function
\[
{}_0F_1(b_1,z)=\sum_{k=0}^\infty \frac{z^n}{(b_1)_n n!}=\lim_{a_1\to\infty}{}_1F_1(a_1,b_1,z/a_1).
\]
The constant $c$ is now
\[
c=\int_0^{\infty} e^{-x/b} {}_0F_1(2, \frac{2\kappa}{\sigma^2}x) dx.
\]

\section{Discussion and future directions}\label{sec:disc-conclusions}

Here we have treated, in great generality, the combined effects of both transcriptional/translational bursting as well as diffusive fluctuations on the dynamics of simple gene regulatory networks.  We have proved, under very mild conditions, the existence and uniqueness of a stationary density of molecular concentration as well as its asymptotic stability, and were able to provide an explicit expression for this density when the burst rate $\varphi$ is constant.

In \cite{mackeysimple} it has been argued, based on molecular interactions, that in general repressible and inducible systems the function $\varphi$ should have the form, c.f. also \cite{mcmmtkry11},
\begin{equation}
\varphi(x)=\lambda\frac{1+\Theta x^N}{\Lambda+\Omega x^N} \equiv \lambda f(x),
\label{eq:gen-control}
\end{equation}
where $\lambda,\Lambda,\Omega, N$ are positive constants and $\Theta\ge 0$.  All of these constants are determined by the reaction rate constants for molecular binding and unbinding, and ideally we would like to be able to offer an explicit solution in this case but we have been unable to do so.  We have explored various avenues, and conclude that this must remain an avenue for further research.

\section*{Acknowledgments}
MCM is supported by a Discovery Grant from the Natural Sciences and Engineering Research Council (NSERC) of Canada.  MCM would like to thank the
Institut f\"{u}r Theoretische Neurophysik, Universit\"{a}t Bremen for their hospitality during the time in which much of the writing of this paper took place.
MTK is supported by the Polish NCN grant No. 2017/27/B/ST1/00100.
This work was partially supported by the grant 346300 for IMPAN from the Simons Foundation and the matching 2015-2019 Polish MNiSW fund.



\section*{References}

\end{document}